\titlespacing{\section}{1pt}{*1}{*1}
\titlespacing{\subsection}{1pt}{*1}{*0}
\titlespacing{\subsubsection}{1pt}{*0}{*0}
\newcolumntype{H}{>{\setbox0=\hbox\bgroup}c<{\egroup}@{}}
\newcolumntype{Y}{>{\RaggedRight\arraybackslash}X} 
\newtheorem{theorem}{Theorem}[section]
\newcommand{\R}{\mathbb{R}}
\begin{document}
\title{Blockchain Based Decentralized Replay Attack Detection for Large Scale Power Systems}
\author{\IEEEauthorblockN{Paritosh Ramanan\IEEEauthorrefmark{1}\IEEEauthorrefmark{2},Dan Li\IEEEauthorrefmark{2} and Nagi Gebraeel\IEEEauthorrefmark{3}}
\thanks{This work was supported in part by the Georgia Power Faculty Fellowship; in part by the U.S. Department of Energy, Office of Science, Office of Advanced Scientific Computing Research, Applied Mathematics Program under Award DE-SC- 0016564; and in part by the Sam Nunn Security Program (SNSP) Fellowship.}
\thanks{\IEEEauthorrefmark{1}School of Computational Science and Engineering, Georgia Institute of Technology, Atlanta, GA, USA 30332}
\thanks{\IEEEauthorrefmark{3} Department of Industrial Engineering, College of Engineering, Computing, and Applied Sciences, Clemson University, Clemson, SC 29634}
\thanks{\IEEEauthorrefmark{3}H. Milton Stewart School of Industrial and Systems Engineering, Georgia Institute of Technology, Atlanta, GA, USA 30332.\\paritoshpr@gatech.edu,danli@gatech.edu,nagi@isye.gatech.edu}
}
\markboth{Accepted to IEEE Transactions on Systems, Man and Cybernetics: Systems}{Ramanan \MakeLowercase{\textit{et al.}}: Blockchain Based Replay Attack Detection}
\maketitle
\begin{abstract}
Large scale power systems are comprised of regional utilities with  assets that stream sensor readings in real time. In order to detect cyberattacks, the globally acquired, real time sensor data needs to be analyzed in a centralized fashion. However, owing to operational constraints, such a centralized sharing mechanism turns out to be a major obstacle. In this paper, we propose a blockchain based decentralized framework for detecting coordinated replay attacks with full privacy of sensor data. We develop a Bayesian inference mechanism employing locally reported attack probabilities that is tailor made for a blockchain framework. We compare our framework to a traditional decentralized algorithm based on the broadcast gossip framework both theoretically as well as empirically. With the help of experiments on a private Ethereum blockchain, we show that our approach achieves good detection quality and significantly outperforms gossip driven approaches in terms of accuracy, timeliness and scalability.
\end{abstract}
\begin{IEEEkeywords}
blockchain, decentralized analytics, data privacy, power systems, global replay attacks
\end{IEEEkeywords}
\section{Introduction}
Affordable sensor and communication technologies have given rise to a growing wave of industrial digitization. The power industry has been at the forefront of this trend that has culminated into a digital transformation of the power grid. Such levels of digitization have led to automation and digital control components that are collectively referred to as Industrial Control Systems (ICS). Until recently, ICS used specialized communication and control protocols that made them relatively immune to cyberattacks. However, with increase in Industrial Internet-Of-Things (IIoT) enabled assets, traditional ICS have gradually become heavily integrated with standard IT components. Meanwhile, on the physical level, the grid has evolved into a complex network with a high degree of interdependency among utility providers \cite{rosas2007topological, kinney2005modeling}. Such an interdependent, digitized grid has an increased vulnerability to various kinds of cyberattacks \cite{krutz2005securing}. Tackling these vulnerabilities requires competing utilities to share sensitive information with a trusted, centralized entity that can quickly assess cybersecurity related threats. The process of sharing data with a centralized entity can be challenging for many utilities due to the presence of a single point of failure, privacy concerns and competing market dynamics. Therefore, this paper proposes a blockchain based, decentralized methodology for detecting the probability of a global network cyberattack while preserving data privacy. Our approach is particularly useful in the context of IIoT enabled assets that yield real time sensor data to help monitor the power network.

A significant portion of ICS-focused cyberattacks involve data manipulation. Often the intent of these attacks is to impact asset reliability either by accelerating physical or efficiency degradation or causing sudden breakdowns. The \textit{Stuxnet} worm has often been referenced as a classic example of an ICS-focused cyberattack with data manipulation \cite{kushner2013real, albright2010did}.
Another popular example was the Aurora Generator Test in which a 2.25 MW substation generator was destroyed through a planned cyberattack that caused an out-of-sync closing of protective relays \cite{weiss2016aurora, rid2012cyber}

ICS cyberattacks involving data manipulation have been classified into three major types, false data injections \cite{liu2011false}, replay attacks \cite{peng2019survey},\cite{mo2014detecting} and covert attacks \cite{smith2011decoupled}. This paper considers ICS replay attacks where a malicious agent replays sensor measurements representing normal operating conditions in order to mask underlying malicious control actions. We focus on investigating coordinated, large scale scenarios where an ICS replay attack is mounted on more than one regional utility provider in the power network. We refer to such attacks as global network cyberattacks, or global attacks for short. ICS replay attacks are difficult to detect. Most of the existing detection algorithms do not distinguish between replay cyberattacks and naturally occurring equipment or controller faults. 

{
Current state-of-the-art ICS attack detection methods are intended for individual plant sites and assets only. Utility stakeholders typically rely solely on such localized detection mechanisms to raise alarms. However, without the knowledge of a rapidly evolving state of global cyber health, the local mechanism will only raise an alarm when local assets are either i) under an attack themselves, or ii) are facing the system repercussions of a network-wide attack. By the time a local alarm has been raised, utilities would have lost precious time which could otherwise be used to localize, diagnose and isolate faults. Moreover, local detection schemes are inherently based on some underlying hypothesis test, which means that there is always some level of unavoidable false alarms. In large scale settings like the one considered in this paper, detection schemes could significantly overstate the cyber threat level since false alarm rates increase with the number of hypotheses tests being conducted \cite{benjamini2001control}. Without active collaboration with other stakeholders, utilities might waste resources trying to diagnose false alarms unnecessarily. Therefore, a major stumbling block for global attack detection arises due to the inability of different utility stakeholders to collaborate and share their local ICS attack detection insights and alarms. Such roadblocks are inevitable despite the fact that stakeholders might be equipped with state of the art ICS attack detection mechanisms. 

Consequently, utility stakeholders are left with a tough choice between one of two options. Either, they pursue a purely localized approach with state of the art ICS attack detection, not withstanding its various blinding limitations. Alternatively, they could adopt a real time sensor and cyber incident data sharing strategy that relies on a centralized coordinator. The centralized entity would be in charge of real time processing and analytics of sensor data being streamed from a large number of assets, while providing alerts pertaining to the global cyber health of the network as a whole. Many such centralized data sharing programs have been pursued in the past. For instance, the U.S. Department of Energy operates the Cyber Risk Information Sharing Program (CRISP) \cite{multiYearPlan} which provides utility members with a platform to share cyber incident/sensor data. 

However, such centralized data sharing programs come with their own set of challenges. First, sharing data through a centralized repository presents several privacy risks along with efficiency and agility issues \cite{fisk2015privacy} with respect to sensor data. Second, in the absence of legal obligation to participate, many critical utility providers do not pursue membership of data sharing cooperatives like CRISP  due to a variety of business as well as logistical reasons. Third, programs like CRISP also require hardware upgrades to existing IT infrastructure and associated costs as well \cite{multiYearPlan}. 
Lastly, stakeholders might harbor inherent distrust in the centralized coordinator itself which is usually the federal government agencies and/or regulatory bodies. Such distrust could be due to apprehensions pertaining to unintended use of sensor data by regulators to detect violations of critical infrastructure rules and impose penalties. 

The unique characteristics of the blockchain could be leveraged to achieve the best of both worlds: the privacy and agility of a purely localized approach coupled with the accuracy and robustness of the centralized method. The blockchain could allow stakeholders to use state of the art ICS attack detection mechanisms locally while being able to share their alarm statistics with all other stakeholders in real time. Further, the underlying consensus mechanism of the blockchain would be useful in estimating the global cyber health status by pooling local alarms in a fully trustworthy and transparent fashion. While the blockchain provides a robust computational platform for global aggregation of alarms, the local ICS attack detection algorithms would eliminate the need to move low level sensor data ensuring full data privacy. As a result, stakeholders would be more amenable to participating in such a decentralized mechanism that provides prompt insights while potentially reducing false alarm rates drastically. The significant reduction in setup and maintenance costs due to elimination of a centralized aggregator \cite{ramanan2019baffle} is an understated advantage of a blockchain driven approach.}

In this paper, we propose the use of a permissioned blockchain for detecting globally coordinated replay attacks in a decentralized fashion. Blockchains typically rely on a consensus among multiple mistrusting parties to achieve a consistent global state. As a result, blockchain based platforms are decentralized in nature and do not involve a centralized chain of command. Our framework therefore ensures data privacy by allowing individual utilities to run their own detection algorithms locally. This leads to full ownership of cyber incident sensor data with complete privacy. It also eliminates the associated cost of setting up a trusted, centralized third party. Moreover, we utilize blockchain driven Smart Contracts (SC) \cite{wang2019blockchain} to estimate the likelihood of a global replay attack based on alarms and insights aggregated from the various utilities. 

In order to demonstrate the blockchain's ability to rapidly generate and propagate global insights, we perform an in depth comparison with well established information diffusion mechanisms like Broadcast Gossip (BG). Gossip protocols are an important type of diffusion technique aimed at estimating the global state of a system in a peer-to-peer fashion 
\cite{boyd2006randomized,bg}. As a result, in this paper, we develop a BG based global replay attack detection detection framework for benchmarking purposes. We compare and contrast our blockchain driven approach against the BG based framework in a theoretical as well as in an empirical manner. Our results indicate that the blockchain provides a sound computational platform allowing for global aggregation of local outputs in a timely, accurate and reliable fashion. The major contributions of this work are summarized in detail below:
\begin{itemize}
    \item We develop a decentralized mechanism that relies on Bayesian inference in order to detect a globally coordinated replay attack with full regional data privacy.
    \item We introduce Theorem \ref{thm1}, specifically geared towards maintaining computational efficiency of Bayesian inference on a blockchain platform. 
    \item With the help of Theorem \ref{thm1}, we design a blockchain based framework for computing the Global Attack Probability (GAP) with only one global multiplication and addition steps.
    \item We propose the BG framework as a benchmark and theoretically compare its performance with the blockchain based approach with the help of Theorem \ref{thm2}.
    \item We implement our framework on an Ethereum based private blockchain network to demonstrate its scalability and applicability for varying degrees of cyber threat parameters.
\end{itemize}
Our results conclusively show that the blockchain driven framework is vastly superior to conventional, state-of-the-art information diffusion paradigms, like BG, both in terms of computational performance as well as accuracy of results.
\section{Related Work}
{
\textcolor{black}{
ICSs are a special class of information systems, involving interaction between information technology (IT) and operational technology (OT)\cite{cook2017industrial}. Like all information systems, cyberattacks also threaten many ICSs involved in applications such as water supply, manufacturing, power systems, and energy \cite{sturm2017cyber,kosut2011malicious,copeland2010terrorism}. Attacks like DoS, DDoS, phishing, spoofing and eavesdropping that target generic IT systems can also target ICSs but they can often be effectively detected and isolated by monitoring network traffic  \cite{lazarevic2003comparative,caselli2016specification,ye2004robustness}. On the other hand, data attacks are a more significant threat to ICSs due to their interaction between information, communication and the underlying physical processes \cite{urbina2016limiting} that could potentially lead to critical infrastructure damage.}

\subsection{Data Manipulation Attack Detection}
In data manipulation attacks, attackers manipulate controller/sensor data or even data at rest in order to damage critical assets through malicious control actions and incorrect state estimations leading to degraded asset performance. It has been shown that data attacks can be designed to bypass basic verification methods relying on Cyclic Redundancy Check (CRC), User Datagram Protocol (UDP) and Transmission Control Protocol (TCP). \cite{kang2014cyber,drias2015taxonomy}.

Numerous model-based detection frameworks have been proposed as an added layer of protection for ICS attacks involving data manipulation \cite{liu2011false,dan2010stealth, teixeira2010cyber, yu2015blind, cardenas2011attacks,chaojun2015detecting,mo2014detecting, huang2018online,van2015sequential,hoehn2016detection,weerakkody2015detecting,pan2015classification,rahman2017multi}.
Most detection algorithms rely on differences between actual measurements and those estimated by a model of the physical system. Differences between the estimated (or predicted) and observed measurements or states (i.e., residuals) can be used to detect possible cyberattacks. In most cases, a sequential goodness-of-fit testing procedure serves as the basis for detecting the attack. However, these approaches have been shown to be inefficient in detecting replay attacks, primarily because the observed measurements (replayed data) often match measurements estimated by the system's model. }

{Replay attacks can be detected based on several types of strategies. The first type is watermarking the control actions \cite{romagnoli2019model, mo2014detecting, huang2018online,porter2020detecting,hespanhol2020sensor, mo2015physical}. This type of strategy is based on the assumption that the operator can add watermarks (a small bias or noise) to the control actions, which is unknown to the attacker. The second type is observer-based anomaly detection \cite{irita2017detection,danLi}. This type of strategy is based on the assumption that some of the sensors can be protected or immune from manipulation. The local attack detection algorithm we adopt in this paper \cite{danLi} belongs to the second type, however, it does not assume the operators can proactively protect the sensors, but that some certain types of sensors are immune to manipulation. In fact, any replay attack detection technique can be adapted to our global framework, as long as the detection algorithm can be used to obtain the probability of attack. The advantage of adopting the method in \cite{danLi} is that it detects the attack based on statistical inference, which naturally outputs the probability of attack.}

\subsection{Blockchain Driven Methods for Security and Privacy}
The use of blockchain has been proposed as a gateway to ensuring data privacy and security in a wide variety of application areas. A lightweight, private blockchain paradigm for enhancing the security and privacy of IIoT driven manufacturing platforms has been proposed in \cite{smartfac}. Further, \cite{zhou2019secure} provides a secure, permissioned blockchain based approach for energy trading between electric vehicles and the grid. 
Use of blockchain has been considered in power systems in numerous works in the recent past. Blockchains have been proposed as a means to establish a decentralized, secure technique for transactive energy \cite{BC_Resilient} as well as to handle and trace back energy losses in microgrids that incorporate PV nodes \cite{enmg}. Blockchain driven approaches are also being considered as the perfect computational platforms for collaboratively detecting attacks and anomalies on a global scale with full data privacy \cite{bc_cids}. However, despite its immense potential, efforts exploring the use of the blockchain towards detecting globally coordinated replay attacks on power network ICSs largely remains understated.   
\begin{table*}[!htb]
\centering
\captionsetup{margin={0.1\textwidth,0pt},singlelinecheck=false}
\caption{{ Comparative Analysis of State-of-the-art Consensus Protocols for Permissioned Blockchains}}
\label{tab:poa_comparison}
{ \begin{tabular}{c|c|c|c|c|c|c}
 
\textbf{protocol} & \textbf{PoA type} & \textbf{validators} & \textbf{decentralization} & \textbf{fault tolerance} & \textbf{consistency} & \textbf{block latency}\\
\hline
Clique & modern BFT & at least 1 & high &  $<$ 1/2 validators & \makecell{eventual \\ consistency} & \makecell{low; only 1 message \\ round required\\} \\
& & & & & \\
IBFT & classical BFT & at least 4 & medium & $<$ 1/3 validators & \makecell{consistent but \\ stalling likely} & \makecell{high; 3 message \\ rounds required\\} \\
& & & & & \\
Aura & modern BFT & at least 1 & high & $<$ 1/2 validators & \makecell{no consistency \\ guarantees} & \makecell{medium; 2 message \\ rounds required\\} \\
& & & & & \\
Raft & non BFT & \makecell{exactly 1 leader, \\multiple verifiers, \\ many learners} & \makecell{low; restricted \\to leader} & $<$ 1/2 crash faults & \makecell{consistent \\ (assumes no forking)} & \makecell{low; only leader \\ can mint blocks} \\
\end{tabular}}
\end{table*}

{ 
\subsection{Operational Dynamics of Blockchains}
The operational dynamics of a blockchain implementation are dictated by their individual application or use case. In cases where no assumptions can be made about the identity or intent of the participating entities, a permissionless blockchain implementation is the norm \cite{POA_analysis}. On the other hand, in applications where the parties are well-established real-world entities which elicit a fair degree of trust, a permissioned blockchain implementation is preferred \cite{salimitari2018survey}. The major distinguishing factor between a permissioned and a permissionless blockchain is the underlying consensus protocol which ensures a consistent global state among all parties. 
Permissionless blockchain implementations involve computationally intensive consensus protocols that are designed to operate without any assumption of trust among the mistrusting parties. Popular consensus mechanisms for permissionless blockchains include Proof-of-Work (PoW), Proof-of-Stake (PoS), delegated Proof-of-Stake (dPoS), Proof-of-ElapsedTime (PoET) \cite{salimitari2018survey}. However, permissioned blockchains place an emphasis on high performance involving consensus protocols that leverage the identity of participating entities. In permissioned blockchains, consensus protocols are primarily a fault-tolerant means of achieving global consistency by inducing a tamper proof record \cite{salimitari2018survey}.

\subsection{Consensus Mechanisms for Permissioned Blockchains}
Consensus mechanisms in permissioned settings are usually dominated by Proof-of-Authority (PoA) based algorithms. Rather than relying on solutions to complex mathematical-based challenges like PoW, PoA relies on the authority of real-world entities part of a permissioned ledger. A majority of the authorities have to achieve consensus before a block is permanently added to the chain. As a result, PoA improves security without relying on computational challenges, since an attacker must hack a majority of authorities in order to rescind all transactions \cite{poachains}. PoA chains are also known to have low latency, deterministic block creation process, as well as faster block creation times which are more important in a permissioned setting \cite{poachains,salimitari2018survey}. PoA is more suited to a consortium or permissioned settings due to its security, easier maintenance as well as accountability of the authorities themselves \cite{poachains}. Therefore, in this paper, we adopt a permissioned blockchain driven approach since our problem setting involves a \textit{consortium of real-world entities} such as utility stakeholders with their own ICSs desirous of detecting a global replay attack. 

Most PoA algorithms are based on the concept of Byzantine fault tolerance (BFT) \cite{POA_analysis, salimitari2018survey}. However, non-BFT algorithms like Raft \cite{salimitari2018survey} have also been proposed and form an integral part of many permissioned blockchain implementations such as Corda and GoQuorum. The class of BFT driven PoA mechanisms can in turn be divided into two categories: classical and modern. The classical PoA category consists of protocols such as Istanbul BFT \cite{saltini2019ibft} (offered in Hyperledger Besu, and GoQuorum) with the consistency of the ledger as the priority. The modern category consists of a new family of BFT protocols such as Clique (offered in Hyperledger Besu, Ethereum, OpenEthereum and GoQuorum) and Aura \cite{POA_analysis} (offered in OpenEthereum). The modern PoA protocols emphasize better performance (i.e. low latency and block creation times) over consistency. 


We summarize the characteristic features of widely used consensus protocols i.e. Clique, IBFT, Aura and Raft with the help of Table \ref{tab:poa_comparison}.  Table \ref{tab:poa_comparison} provides a qualitative comparative analysis of each protocol. For an in-depth discussion of the characteristic features presented in Table \ref{tab:poa_comparison}, we refer the reader to Appendix \ref{sec:appendixA}. Due to its low latency, high decentralization potential, increased degree of fault tolerance as well as the capability to achieve eventual consistency, we choose the Clique consensus protocol for our experiments.}




\section{Problem Formulation}\label{sec:probForm}
We primarily use the blockchain as a computational platform for aggregating outputs from local replay attack detection algorithms. In developing our blockchain based global attack detection framework, we consider a power network that is divided topologically into a set of distinct regions denoted by $\mathcal{R} = \{1,2,3 \ldots n\}$. Each region can be thought of as a utility provider with multiple power plants. A blockchain architecture is used to assess the global state of the entire network based on insights that are aggregated in a decentralized, privacy preserving manner. The aggregation determines the overall probability of a global network attack.

We first consider a single region $i$ comprised of $p$ generators monitored and controlled by a single ICS where the state of each generator can be represented by $m$ variables. We assume that the dynamics of this system of generators can be characterized by a linear time-invariant (LTI) model at time $t$ described by Equations \eqref{eq:eqn1a}, \eqref{eq:eqn1b}. 
\begin{gather}
    x_{t+1}=Ax_t+Bu_t+v_t,\label{eq:eqn1a}\\
    y_t=Cx_t+w_t, \label{eq:eqn1b}
\end{gather}
In Equations \eqref{eq:eqn1a}, \eqref{eq:eqn1b}, $x_t\in \R^{mp}$ represents the unobserved state of the system, \textcolor{black}{$u_t\in \R^{mp}$} is the control action while $y_t\in \R^{mp}$ denotes the sensor 
measurements, which are assumed to be noisy realizations of the state. \textcolor{black}{$A\in \R^{mp\times mp}$} is the state transition matrix and represents the system dynamics. \textcolor{black}{$B \in \R^{mp\times mp}$} is the input matrix and represents how the control action impacts the system state. $C$ is the measurement matrix and represents the relationship between the measurements $y_t$ and the state $x_t$. $v_t\in \R^{mp}$ and $w_t\in \R^{mp}$ represent process and measurement noise at time $t$, respectively. Such a type of modeling framework has also been used extensively in prior art \cite{smith2011decoupled, mo2014detecting, teixeira2010cyber}.

In this setting, the Kalman filter is known to be the optimal state estimator\cite{Anderson}. The residuals $r_t$ are defined as the differences between the actual measurements $y_t$ and the predicted measurements $C\hat{x}_{t|t-1}$ as estimated by the Kalman filter. The control action $u_t$ is calculated using a linear-quadratic Gaussian (LQG) controller, which is the optimal controller under the LTI setting \cite{LQG}. The Kalman filter and the LQG controller are estimated based on Equations \eqref{eq:KLQG1}-\eqref{eq:KLQG4}.
      \begin{gather}
            \hat{x}_{t|t-1}=A\hat{x}_{t-1|t-1}+Bu_{t-1} \label{eq:KLQG1},\\
		    \hat{x}_{t|t}=\hat{x}_{t|t-1}+Kr_t \label{eq:KLQG2},\\
		    r_t=y_t-C\hat{x}_{t|t-1} \label{eq:KLQG3},\\
		    u_t=L\hat{x}_{t|t}. \label{eq:KLQG4}
    	\end{gather}
\textcolor{black}{In Equation \eqref{eq:KLQG4}, $L=-(B^TSB+U)^{-1} B^T S A$ where $S$ is the solution to the Ricatti equation \cite{LQG} denoted by Equation \eqref{eq:ricatti}.
\begin{equation}\label{eq:ricatti}
  S=A^T SA+W-A^T SB (B^T SB+U)^{-1}B^TSA   
\end{equation}
In Equation \eqref{eq:ricatti}, $W,U \in \R^{mp\times mp}$ are positive semidefinite matrices used by the LQG controller to minimize its objective function $J = E[x_t^TWx_t + u_t^TUu_t]$ based on state variables and control actions, respectively. Therefore, $W,U$ primarily denote the degree of aggression while regulating the state variables.} 



{ 
\subsection{Threat Model}
As mentioned in Section I, in a replay attack, the attacker gains access to both sensors and controller. This can be achieved by intruding into the ICS through its supervisory computers or other connected hardware. After the access is obtained, the attack is implemented in two stages. The first stage involves eavesdropping and recording of sensor measurements $\{y^r_t\}$ without data manipulation, which can be denoted by
\begin{equation}
  y^r_t=y_t.  
\end{equation}
The second stage is the manipulation of the control actions $u_k$ by injecting bias or noise, while replaying the sensor measurements $y^r_t$ in replacement of $y_t$. The second stage is represented by Equations (\ref{eq:ra1}) and (\ref{eq:ra2})
\begin{gather}
    u'_t=u_t+a_t \label{eq:ra1},\\
    y'_t=y^r_{t-\delta},\label{eq:ra2}
\end{gather}
where $u'_t$ and $y'_t$ are the control action and measurements under attack, respectively, $a_t$ is the bias or noise injected at time $t$, and $\delta$ is the time difference between the onset of the two stages of the replay attack. In this way, the system state is altered due to Equation \eqref{eq:ra1}, but the attack is hard to detect by monitoring the sensor data due to Equation \eqref{eq:ra2}.


In our framework, we assume that the adversary is outside the blockchain and is solely focused on injecting data into a utility stakeholder's ICS in order to carry out a replay attack. In our framework, the stakeholders themselves trust each other and share the insights from their local ICS detection algorithms with each other through the permissioned blockchain. Furthermore, we also assume that an attacker targets at most less than 50\% of the regional utility stakeholders participating in the consortium. Therefore, based on the PoA protocol \cite{POA_analysis}, we can state that an attacker controlling hacked authorities cannot revert transactions and attack the network. 
We assume that all regions are equally likely to experience a replay attack which are mutually independent. Independence is assumed here for mathematical convenience. More importantly, a global network attack is assumed to occur when at least two regions report local attacks. 

Given our threat model, the problem formulation can be viewed as having two components, a local and a global component. The local component represents regional plants that belong to a single utility provider. We assume that each region executes the local algorithms aimed at detecting replay attacks on their plants' ICS. We refer to the local component as the \emph{regional detection model}. The global component, i.e. the \textit{network detection model}, concerns the detection of a coordinated global attack at the network level.}

\subsection{Regional Detection Model}\label{my:local}

As shown in \cite{danLi}, replay attacks can be detected by monitoring the covariance matrix of the residuals $r_t$. Using this approach, let $\sigma^i$ be a random variable representing the probability of an alarm triggered by region $i$. That is, $\sigma^i=1$ if a replay attack is detected and $\sigma^i=0$ otherwise. We define $\hat{\sigma}^i$ as the ground truth that represents whether a replay attack is indeed underway. There are two classic errors that can occur in this setting. Type-I error, $\alpha$, represents the probability of a false alarm, i.e., the algorithm triggers an alarm when there is no attack. The Type-II error, $\beta$, represents the probability of a false negative, i.e., where the detection algorithm fails to detect a true replay attack. These errors can be defined more formally in terms of a region $i$ as follows 
\begin{gather}
\alpha_i = Pr(\sigma^i=1|\hat{\sigma}^i=0), \label{eq:al}\\
 \beta_i = Pr(\sigma^i=0|\hat{\sigma}^i=1). \label{eq:be}
\end{gather}
In Equations \eqref{eq:al}, \eqref{eq:be} $\hat{\sigma}^i=1$ in the event the system is truly under a replay attack whereas $\hat{\sigma}^i=0$ otherwise.
\subsection{Network Detection Model}
Consider a power network comprised of $n$ regions each reporting an alarm based on their local belief of an attack. Recall that a global network attack is triggered if there are two or more distinct regional alarms, i.e. at least two regions detect an attack. Let the set $S = \{\bm{s}_0,\bm{s}_1 \ldots \bm{s}_n\}$ denote all the scenarios that represent no global network attack, where $\bm{s}_0 \in \{0\}^n$ indicates a scenario where no regional alarms have been triggered and $\bm{s}_i \in \{0,1\}^n$ denotes scenarios where only one region $i$ triggers an alarm.
Consequently, for a scenario $\bm{s}_k,$ $k>0$, only its $k^{th}$ element $s^k_k = 1$. Next, let $\bm{\sigma} \in \{0,1\}^n = [\sigma^1,\sigma^2\ldots \sigma^n]$ be a vector of random variables representing regional alarm events. Given $\bm{\sigma}$, the probability of no global network attack can be expressed as $Pr(S|\bm{\sigma})$. Similarly, $Pr(\bm{\sigma}|\bm{s}_k)$ defines the probability of observing $\bm{\sigma}$ given a scenario $\bm{s}_k$
Since regional attacks are assumed to be independent, the following expressions hold true: \eqref{eq:a1}-\eqref{eq:a4}.
\begin{gather}
    Pr(\bm{s}_k) = \prod\limits_{i=1}^nPr(s^i_k), \label{eq:a1}\\
    Pr(\bm{\sigma}|\bm{s}_k) =\prod_{i=1}^{n}Pr(\sigma^i| s^i_k),\label{eq:a2}\\
    Pr(S|\bm{\sigma}) = \sum_{\bm{s}_k \in S}  \frac{Pr(\bm{\sigma}|\bm{s}_k)Pr(\bm{s}_k)}{Pr(\bm{\sigma})}, \label{eq:a3} \\
    Pr(\bm{\sigma}) = Pr(\sigma^1 \cap \sigma^2 \ldots \cap\sigma^n) = \prod\limits_{i=1}^{n}Pr(\sigma^i). \label{eq:a4}
\end{gather}
The occurrence of the event denoted by $s^i_k$ is only relevant in the context of the ground truth $\hat{\sigma}^i$. Therefore, we can state that $Pr(s^i_k) = Pr(\hat{\sigma}^i = s^i_k)$, where $Pr(\hat{\sigma}^i = s^i_k)$ reflects the prior probability of the existing ground truth $\hat{\sigma}^i$ being equal to $s^i_k$. 
Consequentially, substituting Equations \eqref{eq:a1} and \eqref{eq:a2} in \eqref{eq:a3}, we obtain the relation represented by \eqref{eq:sumP}.
\begin{equation}\label{eq:sumP}
    \sum_{\bm{s}_k \in S} Pr(\bm{\sigma}|\bm{s}_k)Pr(\bm{s}_k) =\sum_{\bm{s}_k \in S}\prod_{i=1}^nPr(\sigma_i|\hat{\sigma}^i=s^i_k)Pr(\hat{\sigma}^i=s^i_k).
\end{equation}
We also know that
\begin{equation}\label{eq:sigma}
    Pr(\sigma^i) = Pr(\sigma^i|\hat{\sigma}^i=0)Pr(\hat{\sigma}^i=0) + Pr(\sigma^i|\hat{\sigma}^i=1)Pr(\hat{\sigma}^i=1).
\end{equation}
We note that $Pr(\sigma^i)$ and $Pr(\sigma_i|s^i_k)Pr(\hat{\sigma}^i=s^i_k)$ in Equations \eqref{eq:sumP}, \eqref{eq:sigma}, respectively, can be computed locally allowing us to propose the the following theorem.
\begin{theorem} \label{thm1}
The probability of no global attack denoted by $Pr(S|\sigma)$ is governed by the equality,
\[Pr(S|\bm{\sigma}) = \Big(\prod_{i=1}^n\frac{a_i}{Pr(\sigma^i)}\Big)\Big(1+\sum_{i=1}^n\frac{b_i}{a_i}\Big)\]
where, \[a_i = Pr(\sigma^i|\hat{\sigma}^i=0)Pr(\hat{\sigma}^i=0),\text{ }b_i = Pr(\sigma^i|\hat{\sigma}^i=1)Pr(\hat{\sigma}^i =1)\]
\end{theorem}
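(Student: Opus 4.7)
The plan is to start from Equation \eqref{eq:sumP} and evaluate the sum over $S$ term by term, exploiting the very special structure of the scenarios in $S$: each $\bm{s}_k$ has at most one coordinate equal to $1$. This reduces the proof to straightforward algebraic bookkeeping, followed by a division by $Pr(\bm{\sigma})$ as given in Equation \eqref{eq:a4}.

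First, I would split the sum in \eqref{eq:sumP} into the $k=0$ term and the $n$ terms corresponding to $\bm{s}_1,\ldots,\bm{s}_n$. For $\bm{s}_0$, every $s^i_0 = 0$, so the inner product collapses to $\prod_{i=1}^n Pr(\sigma^i \mid \hat\sigma^i = 0) Pr(\hat\sigma^i = 0) = \prod_{i=1}^n a_i$. For each $k \ge 1$, only $s^k_k = 1$ while $s^i_k = 0$ for $i \neq k$, so the inner product becomes $b_k \prod_{i \neq k} a_i$, which I would rewrite as $(b_k/a_k)\prod_{i=1}^n a_i$ in order to factor out a common term.

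Second, I would factor $\prod_{i=1}^n a_i$ out of the entire sum to obtain
\[
\sum_{\bm{s}_k \in S} Pr(\bm{\sigma}\mid \bm{s}_k) Pr(\bm{s}_k) = \Big(\prod_{i=1}^n a_i\Big)\Big(1 + \sum_{i=1}^n \tfrac{b_i}{a_i}\Big).
\]
Finally, I would divide both sides by $Pr(\bm{\sigma}) = \prod_{i=1}^n Pr(\sigma^i)$ using \eqref{eq:a3} and \eqref{eq:a4}, and distribute the denominator inside the first product to obtain the stated formula.

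The only real obstacle is notational rather than conceptual: correctly tracking which coordinate of $\bm{s}_k$ is the nonzero one and resisting the temptation to treat $\bm{s}_k$ as a generic binary vector (the restriction that $S$ contains only scenarios with at most one attack is exactly what makes the closed form have just two factors instead of an exponential number of summands). As long as that indexing is handled carefully, the rest is a one-line factorization, and no assumption beyond the independence identities \eqref{eq:a1}--\eqref{eq:a4} and the marginalization \eqref{eq:sigma} is needed.
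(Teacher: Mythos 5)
Your proposal is correct and follows essentially the same route as the paper: the paper packages the key step as a ratio $\psi_k = Pr(\bm{\sigma}|\bm{s}_k)Pr(\bm{s}_k)/\bigl(Pr(\bm{\sigma}|\bm{s}_0)Pr(\bm{s}_0)\bigr) = b_k/a_k$, while you substitute the single nonzero coordinate of $\bm{s}_k$ directly to get $(b_k/a_k)\prod_{i=1}^n a_i$, but these are the same computation. The subsequent factorization of $\prod_{i=1}^n a_i$ and the division by $Pr(\bm{\sigma}) = \prod_{i=1}^n Pr(\sigma^i)$ match the paper's argument exactly.
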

\begin{proof}
Based on our assumptions, we derive the relations represented by Equations \eqref{eq:main2}, \eqref{eq:main3}.
\begin{gather}
    Pr(\bm{\sigma}|\bm{s}_0)Pr(\bm{s}_0) = \prod_{i=1}^{n}Pr(\sigma^i|\hat{\sigma}^i=0)Pr(\hat{\sigma}^i=0),\label{eq:main2}\\
    Pr(\bm{\sigma}|\bm{s}_k)Pr(\bm{s}_k) = \prod_{i=1}^{n}Pr(\sigma^i|\hat{\sigma}^i=s^i_k)Pr(\hat{\sigma}^i=s^i_k)\label{eq:main3}.
\end{gather}
Dividing Equations \eqref{eq:main3} with \eqref{eq:main2}, we get
\begin{equation}\label{eq:frac}
    \psi_k=\frac{Pr(\bm{\sigma}|\bm{s}_k)Pr(\bm{s}_k)}{Pr(\bm{\sigma}|\bm{s}_0)Pr(\bm{s}_0)} = \prod_{i=1}^{n}\frac{Pr(\sigma^i|\hat{\sigma}^i=s^i_k)Pr(\hat{\sigma}^i=s^i_k)}{Pr(\sigma^i|\hat{\sigma}^i=0)Pr(\hat{\sigma}^i=0)}.
\end{equation}
Since, $s^i_k = 1$ only when $i=k$, we can rewrite Equation \eqref{eq:frac} as
\begin{equation}\label{eq:frac2}
    \psi_k
    =\Big[\prod_{\substack{i=1\\i\neq k}}^{n}\frac{Pr(\sigma^i|\hat{\sigma}^i=0)Pr(\hat{\sigma}^i=0)}{Pr(\sigma^i|\hat{\sigma}^i=0)Pr(\hat{\sigma}^i=0)}\Big] \frac{Pr(\sigma^k|\hat{\sigma}^k=1)Pr(\hat{\sigma}^k=1)}{Pr(\sigma^k|\hat{\sigma}^k=0)Pr(\hat{\sigma}^k=0)}.
\end{equation}
Equations \eqref{eq:frac}, \eqref{eq:frac2} imply that,
\begin{equation}\label{eq:sub1}
    Pr(\bm{\sigma}|\bm{s}_k)Pr(\bm{s}_k) = Pr(\bm{\sigma}|\bm{s}_0)Pr(\bm{s}_0)\frac{b_k}{a_k},\quad \forall k>0 .
\end{equation}
Recall from Equation \ref{eq:sumP}, that
\begin{equation}\label{eq:sub2}
    Pr(\bm{\sigma}|\bm{s}_0)Pr(\bm{s}_0) = \prod_{i=1}^nPr(\sigma^i|\hat{\sigma}^i=0)Pr(\hat{\sigma}^i=0) = \prod_{i=1}^na_i.
\end{equation}
Since $a_k,b_k$ can be purely computed by region $k$ we obtain Equation \eqref{eq:penultimate} by combining Equations \eqref{eq:sub1} and \eqref{eq:sub2} 
\begin{equation}\label{eq:penultimate}
    \sum\limits_{\bm{s}_k \in S} Pr(\bm{\sigma}|\bm{s}_k)Pr(\bm{s}_k)= \Big(\prod_{i=1}^na_i\Big)\Big(1+\sum_{i=1}^n\frac{b_i}{a_i}\Big).
\end{equation}
Based on  Equations \eqref{eq:penultimate} and \eqref{eq:a4}, we obtain
\[Pr(S|\bm{\sigma}) = \frac{\sum\limits_{\bm{s}_k \in S} Pr(\bm{\sigma}|\bm{s}_k)Pr(\bm{s}_k)}{\prod\limits_{i=1}^{n}Pr(\sigma^i)} = \Big(\prod_{i=1}^n\frac{a_i}{Pr(\sigma^i)}\Big)\Big(1+\sum_{i=1}^n\frac{b_i}{a_i}\Big)\].
\end{proof}
The entities $a_i, b_i$ can be computed in a simplified manner as follows
\begin{gather}
    a_i = Pr(\sigma^i,\hat{\sigma}^i=0) = Pr(\sigma^i|\hat{\sigma}^i=0)Pr(\hat{\sigma}^i=0), \label{eq:abcalc1}\\
    b_i = Pr(\sigma^i,\hat{\sigma}^i=1) = Pr(\sigma^i|\hat{\sigma}^i=1)Pr(\hat{\sigma}^i=1). \label{eq:abcalc2}
\end{gather}
Moreover, the prior distribution can be updated in a purely local fashion using:
\begin{gather}
    Pr(\hat{\sigma}^i=0) = \frac{a_i}{a_i+b_i}, \label{eq:p1}\\
    Pr(\hat{\sigma}^i=1) = \frac{b_i}{a_i+b_i}. \label{eq:p2}
\end{gather}
Theorem \ref{thm1} indicates that the global attack probability can be computed through one global multiplication and addition of the two terms, $\frac{a_i}{Pr(\sigma^i)},\frac{b_i}{a_i}$. We therefore incur significant computational benefits especially on a blockchain based framework where computation is expensive.
\begin{figure*}[t!]
\centering
\captionsetup{justification=centering}
\includegraphics[width=\textwidth,keepaspectratio]{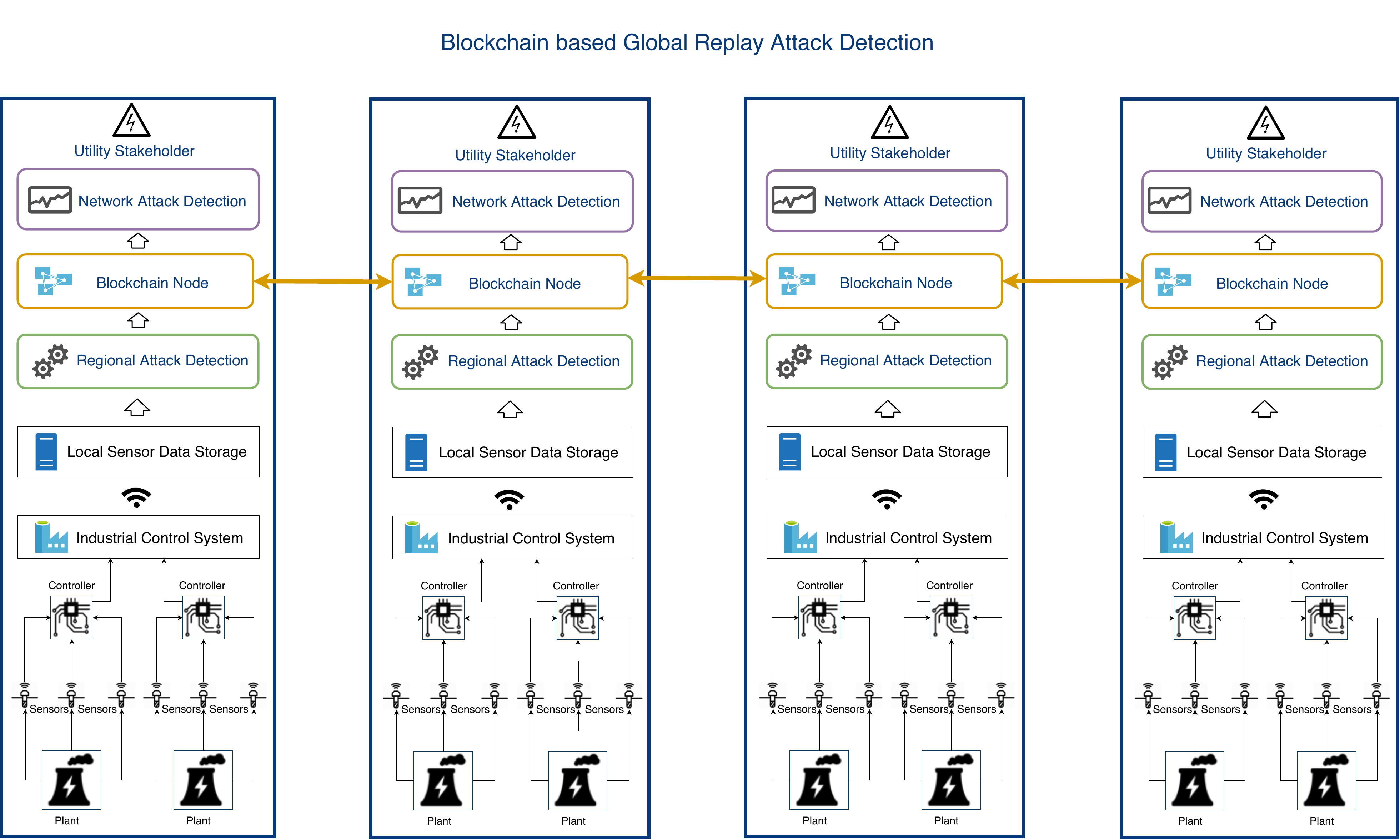}
\caption{System Architecture for blockchain based global replay attack detection}
\label{fig:sysarch}
\end{figure*}
Figure \ref{fig:sysarch} represents the system architecture for our blockchain based global attack detection mechanism. At the lowest level, we have plants and their associated sensors feeding data into controllers that are directly managed by the Industrial Control System (ICS). The ICS stores the data in a local data store on which the regional detection algorithm is applied. Insights from the regional detection model are shared using Theorem \ref{thm1} on the blockchain. Every regional utility stakeholder utilizes the latest available information on the ledger to be able to obtain knowledge of a global replay attack. 

\section{Blockchain Based Framework for Global Replay Attack Detection}%
The introduction of Smart Contracts
(SCs)\cite{wang2019blockchain,zheng2020overview} have been an important addition to the blockchain paradigm. An SC is typically a snippet of code that resides on the blockchain. It can contain complex program logic on the blockchain and can be invoked by any party having access to the blockchain. Once invoked, an SC is self-triggering and proceeds to alter the state of the ledger with the help of the underlying consensus protocol. 
As a result, an SC can be used for executing business logic through consensus among mistrusting parties paving the way for a decentralized application. Therefore, due to their versatility, blockchain driven Smart Contracts (SC) provide an ideal environment for the Network Detection model in a fully decentralized fashion. 

Solidity \footnote{\url{https://github.com/ethereum/solidity}} is a popular SC oriented programming language that can be leveraged for developing highly versatile decentralized applications. A key constraint of languages like Solidity is the lack of floating point arithmetic operations \footnote{\url{https://solidity.readthedocs.io/en/v0.5.3/types.html\#fixed-point-numbers}}. This is done to primarily reduce the computational burden on the underlying consensus protocols. As a result, one has to define a precision conversion factor for converting floating point values to integers before feeding them as inputs to the SC on the front end side. Obviously, the choice of the precision factor could have a wide impact on the detection accuracy of a global replay attack. In this section, we propose a Solidity based Smart Contract(SC) design that embodies our global replay attack detection paradigm discussed in Section \ref{sec:probForm}.

\subsection{Smart Contract Design}
Table \ref{tab-scfeat}, \ref{tab-scfunc} depict the attributes and functions that form an integral part of the SC design for our framework. In Table \ref{tab-scfeat}, the precision factor is denoted by $D$ and arrays $aOPs,bOa$ store the integer values $\frac{D a_i}{Pr(\sigma^i)},\frac{D b_i}{a_i}$ of all regions respectively. In Table \ref{tab-scfunc}, the functions $updateData$, $aggregateValues$ can be asynchronously invoked by any region for updating the SC with their local values and obtaining the corresponding global aggregates respectively. 
\begin{table}[!htb]
\centering
\caption{Solidity based Smart Contract Attributes}
\label{tab-scfeat}
\begin{tabular}{c|c|c}
\toprule
\textbf{Attribute}  & \textbf{Type} & \textbf{Description}      \\ \hline
\textit{D} & \textit{uint256} & Precision for floating point conversion.\\
\textit{n} & \textit{uint} & The total number of regions i.e. $|\mathcal{R}|$\\
\textit{aOPs}[] & \textit{uint256} & Array for storing $\frac{D a_i}{Pr(\sigma^i)},\forall i\in\mathcal{R}$\\
\textit{bOa}[] & \textit{uint256} & Array for storing $\frac{D b_i}{a_i},\forall i \in \mathcal{R}$\\
\bottomrule
\end{tabular}
\end{table}
\begin{table}[!htb]
\centering
\caption{Solidity based Smart Contract Function}
\label{tab-scfunc}
\begin{tabular}{c|c|c}
\toprule
\textbf{Function}  &\textbf{Invoker} & \textbf{Description}      \\ \hline
\textit{updateData} & \textit{Region i} & sets $aOPs[i]$, $bOa[i]$\\ \textit{aggregateValues} & \textit{Region i} & returns
$\prod\limits_{j=1}^n aOPs[i]$, $\sum\limits_{j=1}^n bOa[i]$\\
\bottomrule
\end{tabular}
\end{table}
\subsection{Blockchain Based Global Attack Detection Algorithm}
Algorithm \ref{alg:bdpgat} presents the details of blockchain based global replay attack detection which is executed in a decentralized fashion. We assume an off chain interaction informs all the regions about the SC address, total number of regions as well as the precision factor. Every region initially determines its corresponding local alarm value $\sigma^i$. Let $x_i=\frac{a_i}{Pr(\sigma^i)}, y_i=\frac{b_i}{a_i}$ represent the local statistical values which are converted to integers using the precision factor $D$ and pushed to the SC by invoking $SC.updateData$. Any region can asynchronously invoke $SC.aggregateValues$ in order to obtain $x^b=\prod\limits_{i=1}^nD x_i$ and $y^b=\sum\limits_{i=1}^n D y_i$. A global estimate of no attack can then be computed locally using Theorem \ref{thm1}. A complement of the result can be used to infer the presence of a global attack. At each epoch, the prior distribution pertaining to local alarm values gets updated locally.
\begin{algorithm}
\caption{Decentralized blockchain based algorithm}\label{alg:syncd}
\label{alg:bdpgat}
\begin{algorithmic}
\State owner region deploys and initializes $SC$ on blockchain
\For{i=1,2,3\ldots n \textbf{in parallel}}
\State initialize $D,n$ and obtain $SC$ address.
\While{true}
\State determine the message $\sigma^i$
\State compute $Pr(\sigma^i)$ using Equation \eqref{eq:sigma}
\State compute $x_i,y_i$ using Equations \eqref{eq:abcalc1},\eqref{eq:abcalc2}
\State invoke $SC.updateData\Big(D x_i,D y_i,i\Big)$ 
\State $x^b,y^b\leftarrow SC.aggregateValues()$
\State using Theorem \ref{thm1} compute $1-Pr(S|\bm{\sigma})$ such that, \State \ \ \qquad \qquad \qquad \qquad $Pr(S|\bm{\sigma})=x^b(D+y^b)/D^{(n+1)}$ 
\State update prior distribution using Equations \eqref{eq:p1},\eqref{eq:p2} 
\EndWhile
\EndFor
\end{algorithmic}
\end{algorithm}
\section{Performance Comparison of Blockchain and BG driven approaches}
For developing a decentralized global replay attack detection, we assume the existence of an underlying connectivity graph that represents the system level interconnection among utilities \cite{rosas2007topological}. In this graph, each vertex represents a utility or region. An edge exists if there is a shared transmission line between the corresponding utilities. Without a central aggregator, a diffusion mechanism must take place over the existing connectivity graph in order to detect globally coordinated replay attacks. Therefore, we develop a novel diffusion algorithm based on state-of-the-art BG as a benchmark strategy for our blockchain based approach \cite{bg}. BG can be used to compute the global average of values held by the vertices of the connectivity graph. BG avoids the computational bottlenecks found in other gossip protocols  while converging to the global average in expectation \cite{bg}. Being inherently decentralized, consensus driven and peer-to-peer in nature, BG forms the ideal benchmark for comparing the performance of a blockchain driven framework. 
\subsection{BG based Reformulation of Global Attack Detection}
Since the total number of agents in the system is known, BG can be used to estimate the global sum and product terms present in Theorem \ref{thm1} as well.  Recall $x_i,y_i$, $\forall i \in \mathcal{R}$ from Algorithm \ref{alg:bdpgat}. For computing the sum $y$, the reformulation is trivial and consists of estimating the global sum from the global average $\sum\limits_{k=1}^ny_i/n$. In order to use BG to calculate the product $x=\prod\limits_{i=1}^nx_i$, we reformulate $x = e^u$, where $u=\sum\limits_{i=1}^nu_i$, and $u_i=\log{x_i}$. We leverage the BG protocol to compute $\sum\limits_{i=1}^nu_i/n$ which can be used to estimate $x$ in a purely peer-to-peer fashion.
\subsection{Performance Analysis}
We wish to compare and contrast the precision factor based error introduced in the blockchain framework against the expected asymptotic error in a BG driven framework. As a result, we propose Theorem \ref{thm2} which helps characterize the conditions favorable for the BG to remain competitive with our blockchain framework. For Theorem \ref{thm2}, we consider a set of agents connected by graph $G$ with Laplacian $L$. For our analysis of Theorem \ref{thm2}, we assume a constant value of $\beta$ across all regions. Further, we let $\lambda_{n-2}(L)$ and $\lambda_{1}(L)$ denote the second smallest and the largest eigenvalues of $L$ where $\gamma\in(0,1)$ represents the mixing parameter \cite{bg}. The mixing parameter dictates the contribution of values recieved from each neighbor during the BG protocol at each vertex \cite{bg}.
As illustrated in \cite{bg}, the expected BG error bound for computing the average $\bar{z} = \sum\limits_{i=1}^nz_i/n$ where $z_i$ is the local value held by agent $i \in \{1,2,\ldots n\}$ is given by
\begin{equation}\label{eq:bgerr}
    \lim_{t\to \infty}\Delta (z^t)^2 \leq (\Delta z^0)^2(1-r)\text{, where }r=\frac{\gamma\frac{\lambda_{n-2}}{\lambda_{1}}}{1-\frac{1-\gamma}{2n}\lambda_1}.
\end{equation}
\begin{theorem} \label{thm2}
Let $\Delta p_{b},\Delta p^{\infty}_{g}$ denote the precision error induced in a blockchain framework and the limiting asymptotic mean square error from BG respectively. $\Delta p^\infty_g\leq \Delta p_b$ if
\[ D \leq \frac{1}{\beta\sqrt{n(1-r)\Big[\Big(\frac{\Delta x^0}{x}\Big)^2 + \Big(\frac{\Delta y^0}{1+y}\Big)^2\Big]}}
\]
where $x^0,y^0$ represent the initial error of $x,y$ respectively in case of BG.
\end{theorem}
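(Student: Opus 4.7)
The plan is to derive asymptotic expressions for each of the two errors being compared---the blockchain precision error $\Delta p_b$ and the limiting BG mean-square error $\Delta p^\infty_g$---by propagating per-agent errors through the closed-form expression of Theorem~\ref{thm1}, and then to enforce $\Delta p^\infty_g \leq \Delta p_b$ and invert for $D$.

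First I would rewrite the probability of no global attack compactly as $p = x(1+y)$ with $x = \prod_i x_i$ and $y = \sum_i y_i$, using the notation $x_i = a_i/Pr(\sigma^i)$ and $y_i = b_i/a_i$ of Algorithm~\ref{alg:bdpgat}. First-order error propagation, treating errors in $x$ and $y$ as independent, then gives the unified decomposition
\[
\Big(\frac{\Delta p}{p}\Big)^2 \;\approx\; \Big(\frac{\Delta x}{x}\Big)^2 + \Big(\frac{\Delta y}{1+y}\Big)^2,
\]
which serves as the common scaffold on which the two error models are overlaid.

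For the blockchain side, each per-agent value is uploaded to the SC as $\lfloor Dx_i \rfloor$ or $\lfloor Dy_i \rfloor$, so every coordinate carries an absolute rounding error of order $1/D$. Converting this into a relative error and invoking the scale of $a_i, b_i$ from Equations~\eqref{eq:abcalc1}--\eqref{eq:abcalc2}---whose magnitudes are controlled by the Type-II error $\beta$ under the attack hypothesis---yields a bound of the form $\Delta p_b \propto p/(D\beta)$. For the BG side, I would apply Equation~\eqref{eq:bgerr} separately to the estimator of $\sum_i \log x_i/n$ (from which $x$ is recovered via exponentiation and rescaling) and to the estimator of $\sum_i y_i/n$ (from which $y$ is recovered), then rescale the per-node averaged deviations back to the aggregates $x$ and $y$ required by Theorem~\ref{thm1}. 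This contributes the extra factor of $n$ and gives $(\Delta p^\infty_g/p)^2 \leq n(1-r)[(\Delta x^0/x)^2+(\Delta y^0/(1+y))^2]$. Enforcing $\Delta p^\infty_g \leq \Delta p_b$ and solving algebraically for $D$ then returns the stated bound.

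The main obstacle will be pinning down the $\beta$-dependence in $\Delta p_b$: one has to justify that the characteristic scale of the uploaded values, once combined through the multiplicative and additive structure of $p$, converts the precision floor $1/D$ into an effective relative error of order $1/(D\beta)$ rather than a tighter or looser function of the regional parameters. A related subtlety on the BG side is ensuring that the factor of $n$ arising from the sample-mean-to-aggregate rescaling appears exactly once in the final expression, which requires interpreting the initial dispersions $\Delta x^0, \Delta y^0$ consistently with the normalization convention built into Equation~\eqref{eq:bgerr}.
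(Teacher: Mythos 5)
Your proposal follows essentially the same route as the paper: decompose $p=x(1+y)$, propagate relative errors, bound the blockchain side via the $1/D$ rounding floor and the scale $x_i\in\{\beta,1-\alpha\}$, bound the BG side by applying Equation~\eqref{eq:bgerr} to $\sum_i\log x_i$ and $\sum_i y_i$, and invert for $D$. The only difference is minor bookkeeping of the factors of $n$ (the paper places $n^2$ on the BG side and $n/\beta^2$ on the blockchain side, invoking the assumption $1/(1+y)^2\in O(1/(n\beta^2))$, whereas you distribute them differently), but both conventions yield the same final inequality.
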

\begin{proof}
We consider the error on $p = x\cdot(1+y)$ to obtain
\begin{equation}
    \Delta p^2 = p^2 \Big[\Big(\frac{\Delta x}{x}\Big)^2 + \Big(\frac{\Delta y}{1+y}\Big)^2\Big].
\end{equation}
Analyzing error on $x,y$, we have
\begin{equation}
    \Delta x^2 = |x|^2\Big[\sum\limits_{i=1}^n\Big(\frac{\Delta x_i}{x_i}\Big)^2\Big], \text{ } \Delta y^2 = \sum\limits_{i=1}^n (\Delta y_i)^2.
\end{equation}
With a precision factor of $D$, we know that $\Delta x_i, \Delta y_i \leq \frac{1}{D}$ leading to the relations
\begin{equation}
\Big(\frac{\Delta x}{x}\Big)^2\leq \frac{1}{D^2}\sum\limits_{i=1}^n\Big(\frac{1}{x_i}\Big)^2, \text{ } \Big(\frac{\Delta y}{1+y}\Big)^2\leq \Big[\frac{n}{(1+y)D}\Big]^2.
\end{equation}
We know that,
\begin{equation}
  a_i = Pr(\sigma^i|s^i=0)Pr(s^i=0) \implies x_i = Pr(s^i=0|\sigma^i),  
\end{equation}
where,
\[
    x_i = \begin{cases}
        \beta, & \text{when } \sigma_i=1\\
        1-\alpha, & \text{when } \sigma_i=0
        \end{cases}
  \]
Let $p_b$ denote the global probability of no attack obtained with a blockchain based framework with precision factor $D$. Since $\beta,\alpha \approx 0$ and $ \frac{1}{x_i}\leq \frac{1}{\beta}$, we obtain
\begin{equation}
\Big(\frac{\Delta p_b}{p_b}\Big)^2 \leq  \frac{1}{D^2}\Big[\frac{n}{\beta^2}+ \Big(\frac{n}{1+y}\Big)^2\Big].
\end{equation}
Since $\frac{1}{(1+y)^2}\leq1$, we can safely assume that $\frac{1}{(1+y)^2}\in O(\frac{1}{n\beta^2})$ thereby leading to the following error estimate for the blockchain based algorithm
\begin{equation}\label{eq:blkerr}
    \Big(\frac{\Delta p_b}{p_b}\Big)^2 \leq  \Big(\frac{n}{D}\Big)^2\cdot\frac{1}{n\beta^2}.
\end{equation}
On the other hand for the BG algorithm recall that $x = e^u$, where $u=\sum\limits_{i=1}^nu_i$, $u_i=\log x_i$. Based on error analysis we obtain
\begin{equation}
    (\Delta x)^2 = \Big(\frac{\partial x}{\partial u} \Delta u\Big)^2 \implies \Big(\frac{\Delta x}{x}\Big)^2 = \Delta u^2.
\end{equation}
Let $p_g$ denote the global probability of no attack obtained from a BG based framework. We can therefore say that,
\begin{equation}
    \Big(\frac{\Delta p_g}{p_g}\Big)^2 \leq  \Delta u^2+ \Big(\frac{\Delta y}{1+y}\Big)^2
\end{equation}
Substituting the expected upper bound on error terms $\Delta u^2, \Delta y^2$ from Equation \eqref{eq:bgerr}, we obtain 
\begin{equation}\label{eq:gerr}
    \Big(\frac{\Delta p_g}{p_g}\Big)^2 \leq  n^2(1-r)\Big[(\Delta u^0)^2 + \Big(\frac{\Delta y^0}{1+y}\Big)^2\Big]
\end{equation}
Based on Equations \eqref{eq:blkerr} and \eqref{eq:gerr}, for the BG algorithm to completely outperform the blockchain based method, the following condition must be satisfied
\begin{equation}
    n^2(1-r)\Big[(\Delta u^0)^2 + \Big(\frac{\Delta y^0}{1+y}\Big)^2\Big] \leq \Big(\frac{\Delta p_b}{p_b}\Big)^2 \leq \Big(\frac{n}{D}\Big)^2\cdot\frac{1}{n\beta^2}.
\end{equation}
Since we know that $\Delta u^0 = \frac{x^0}{x}$ we can state that
\begin{equation}
    D \leq \frac{1}{\beta\sqrt{n(1-r)\Big[\Big(\frac{\Delta x^0}{x}\Big)^2 + \Big(\frac{\Delta y^0}{1+y}\Big)^2\Big]}}.
\end{equation}
\end{proof}
Theorem \ref{thm2} shows us that the precision factor $D$ has an inverse relation to the initial BG error, $\Delta x^0/x +\Delta y_0^2/y$. It can also be noted that $r(\gamma)$ is a monotonously increasing function with $0\leq r(\gamma) \leq \frac{\lambda_{n-2}}{\lambda_1}$ \cite{bg}. Therefore, Theorem \ref{thm2} indicates a direct relationship between the precision factor $D$ and the mixing parameter $\gamma$ as well as the eigenvalue ratio denoted by $\frac{\lambda_{n-2}}{\lambda_1}$. 

Based on observations from Theorem \ref{thm2}, we can postulate several constraints on BG in order for it to match the detection quality of a blockchain driven approach with a high precision factor $D$. First, BG must preferably start with low initial error with respect to the global values of $x,y$. Second, since $\gamma \to 1$ is more favorable for BG, only scant perturbation of the local estimate can be allowed when new neighbor messages are received. Therefore, a BG framework cannot afford a drastic change in the overall network mean, making a low initial error state on all nodes imperative to its success. Lastly, for good detection quality, the BG prefers a highly connected underlying graph as indicated by the requirement for $\frac{\lambda_{n-2}}{\lambda_1}\to1$.

%

The prerequisite constraints for the BG to outperform the blockchain present significant implementation challenges especially in a large scale power network with rapidly evolving global cyber health status. As a result, a blockchain driven framework is a highly favorable option for delivering an accurate, reliable and timely estimate of the global attack probability.  
\begin{figure}[t!]
\includegraphics[width=0.47\textwidth,keepaspectratio]{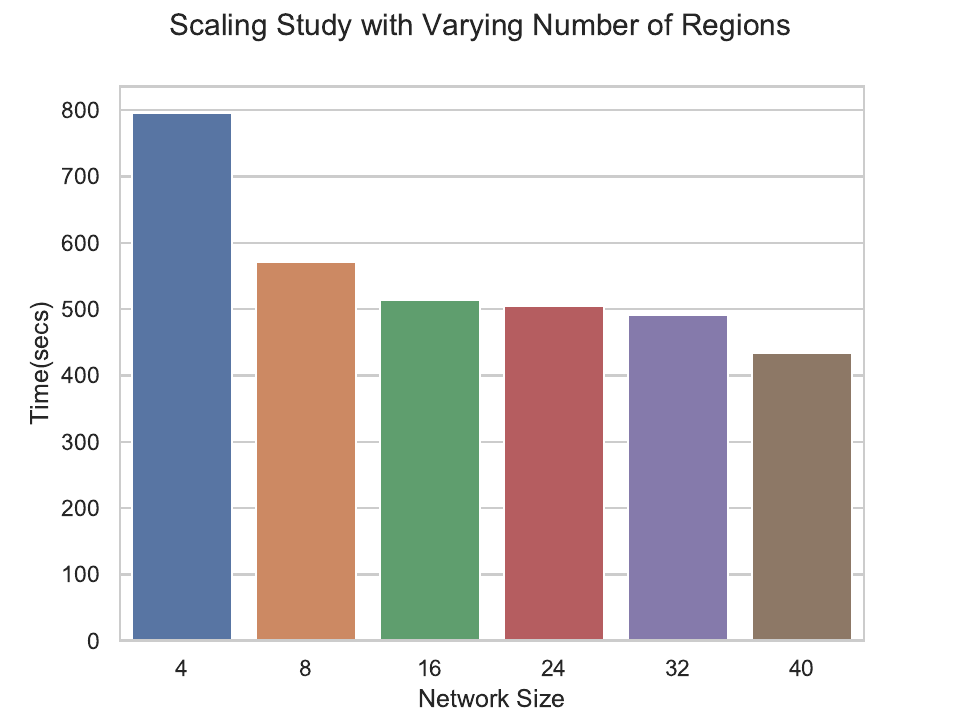}
\caption{Computational Scaling Study comparing Simulation Time against varying number of total regions for IEEE 3012 bus case}
\label{fig:res}
\end{figure}
\begin{figure*}[!htb]
\subfigure[At most 2 regions under attack]{\includegraphics[trim={0 0 0.5cm 0.8cm},clip,width=0.24\textwidth,keepaspectratio]{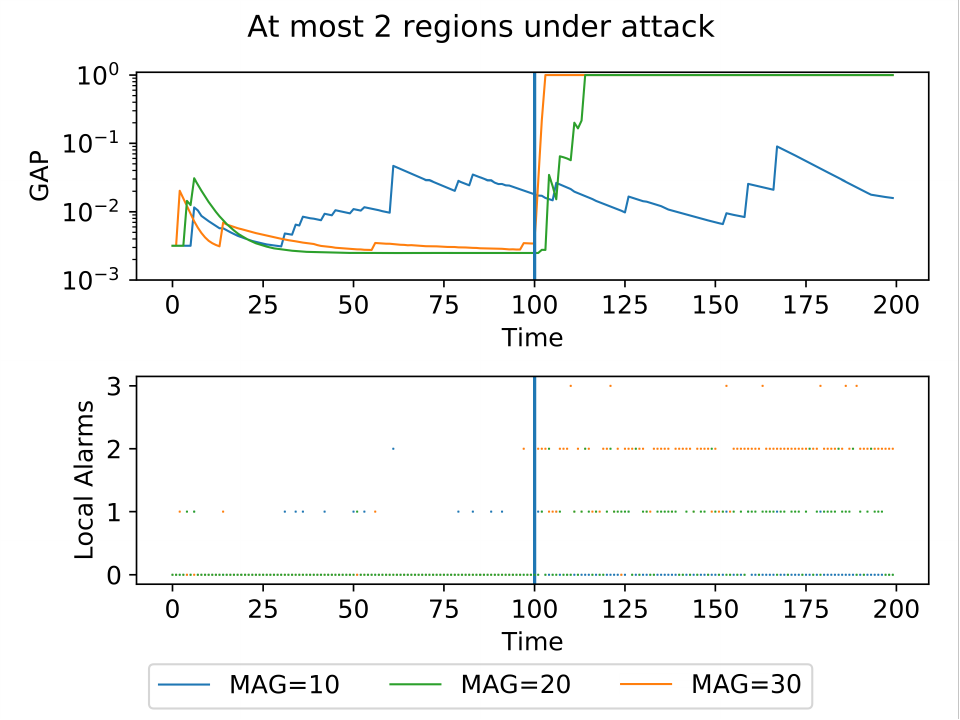}}\label{fig:2NUMA}
\subfigure[At most 4 regions under attack]{\includegraphics[trim={0 0 0.5cm 0.8cm},clip,width=0.24\textwidth,keepaspectratio]{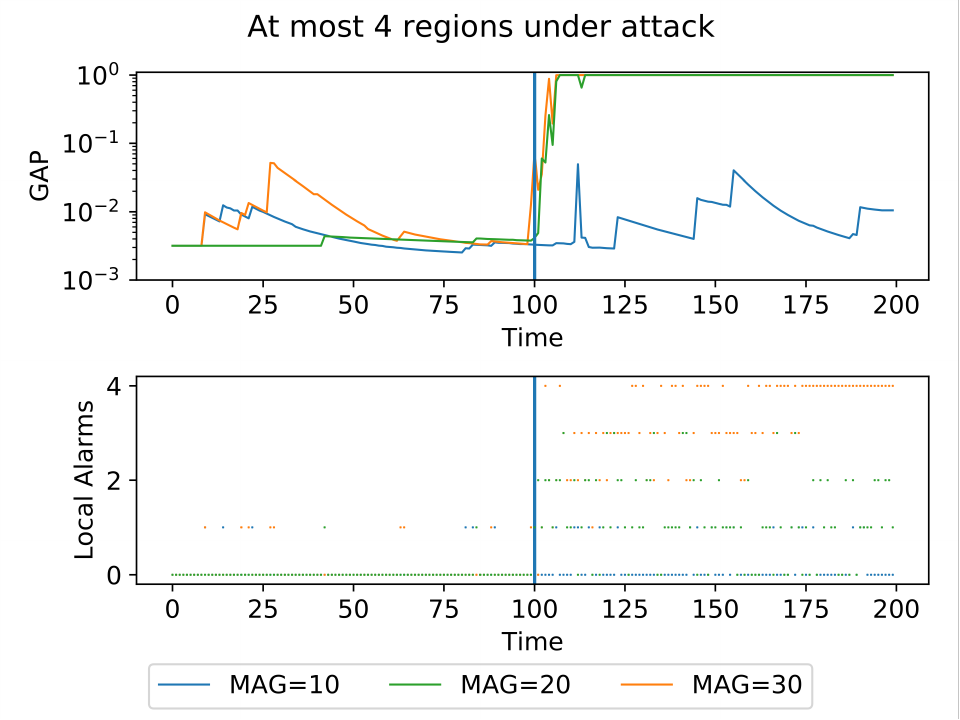}}\label{fig:4NUMA}
\subfigure[At most 8 regions under attack]{\includegraphics[trim={0 0 0.5cm 0.8cm},clip,width=0.24\textwidth,keepaspectratio]{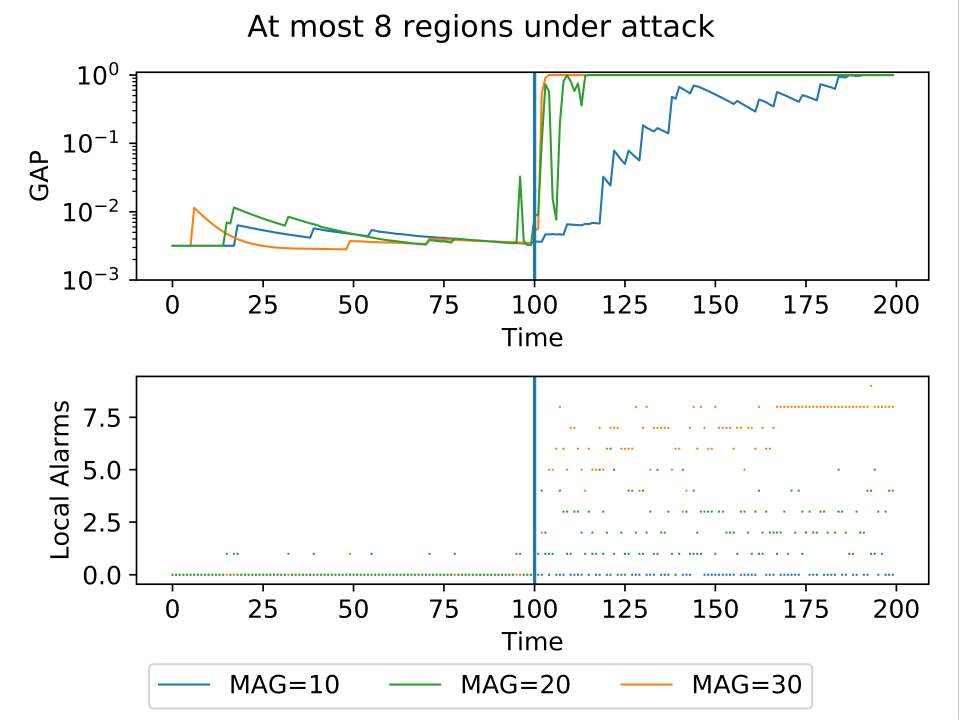}}\label{fig:8NUMA}
\subfigure[All 16 regions under attack]{\includegraphics[trim={0 0 0.5cm 0.8cm},clip,width=0.24\textwidth,keepaspectratio]{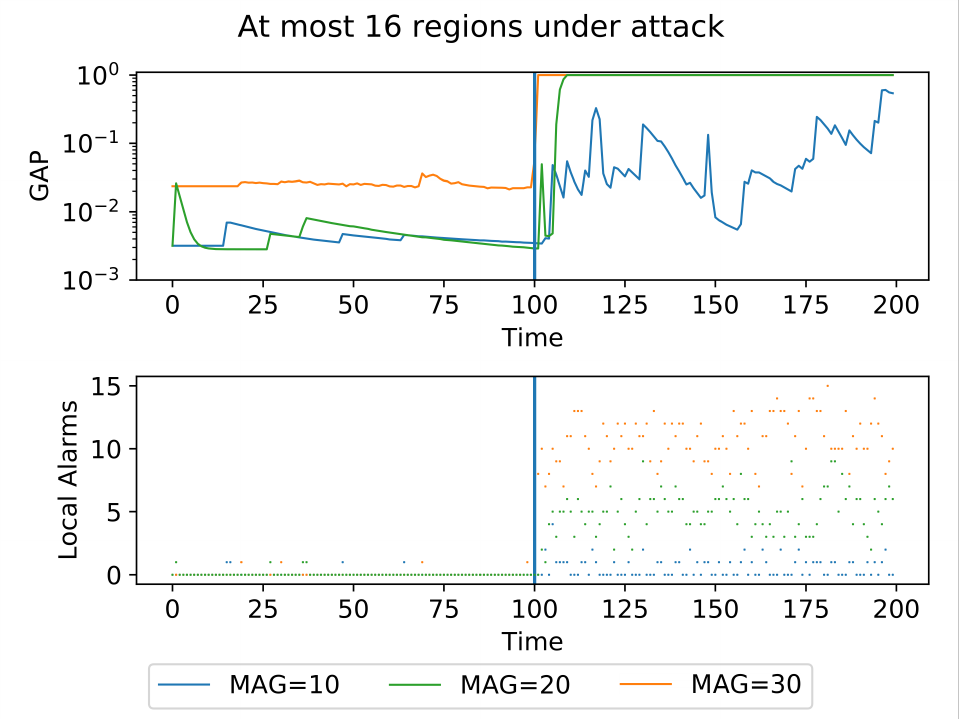}}\label{fig:16NUMA}
\caption{16 region decomposition of IEEE 3012 bus case : Global Attack Probability detection with varying number of regions under attack}
\label{fig:expres}
\end{figure*}
\begin{figure*}[!htb]
\subfigure[1 gossip round per epoch]{\includegraphics[trim={0 0 0cm 1cm},clip,width=0.24\textwidth,keepaspectratio]{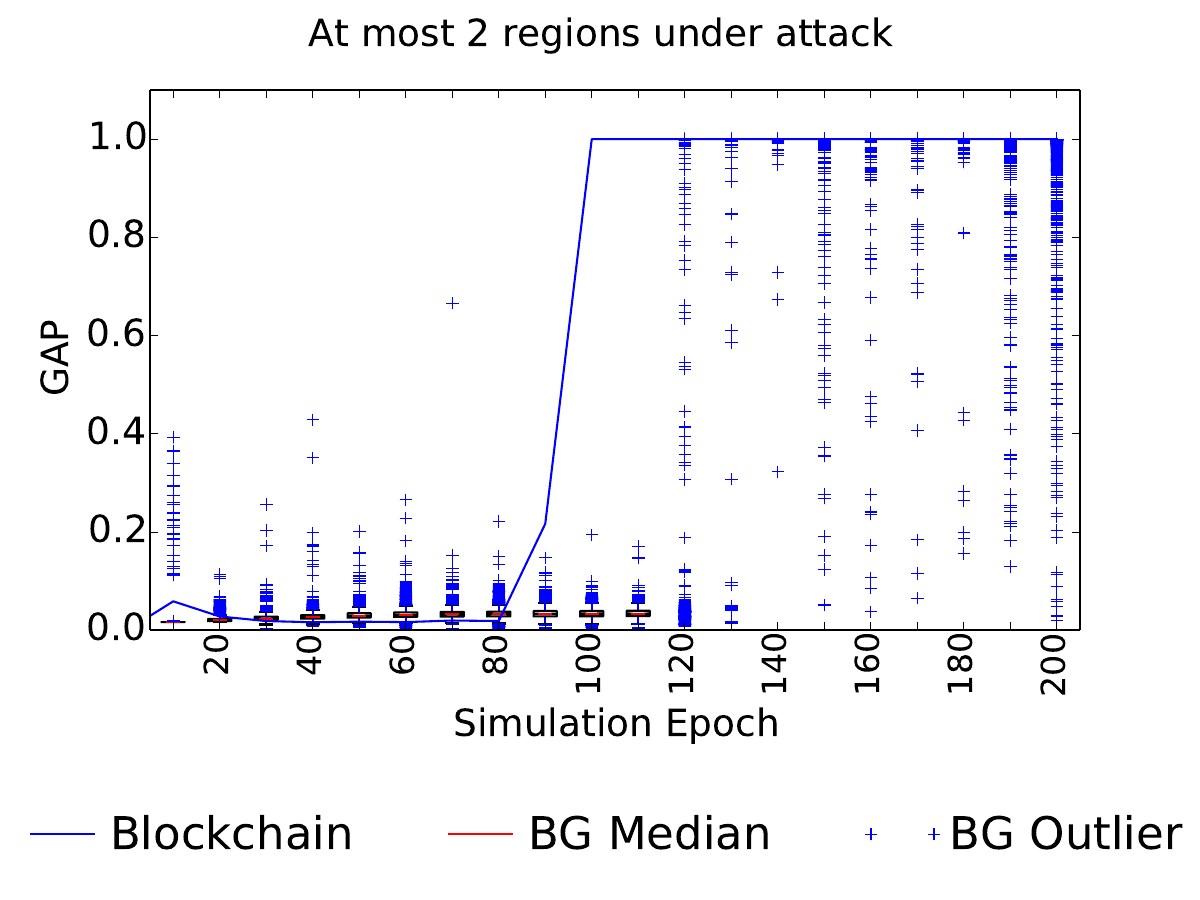}}\label{fig:1BGP}
\subfigure[10 gossip rounds per epoch]{\includegraphics[trim={0 0 0cm 1cm},clip,width=0.24\textwidth,keepaspectratio]{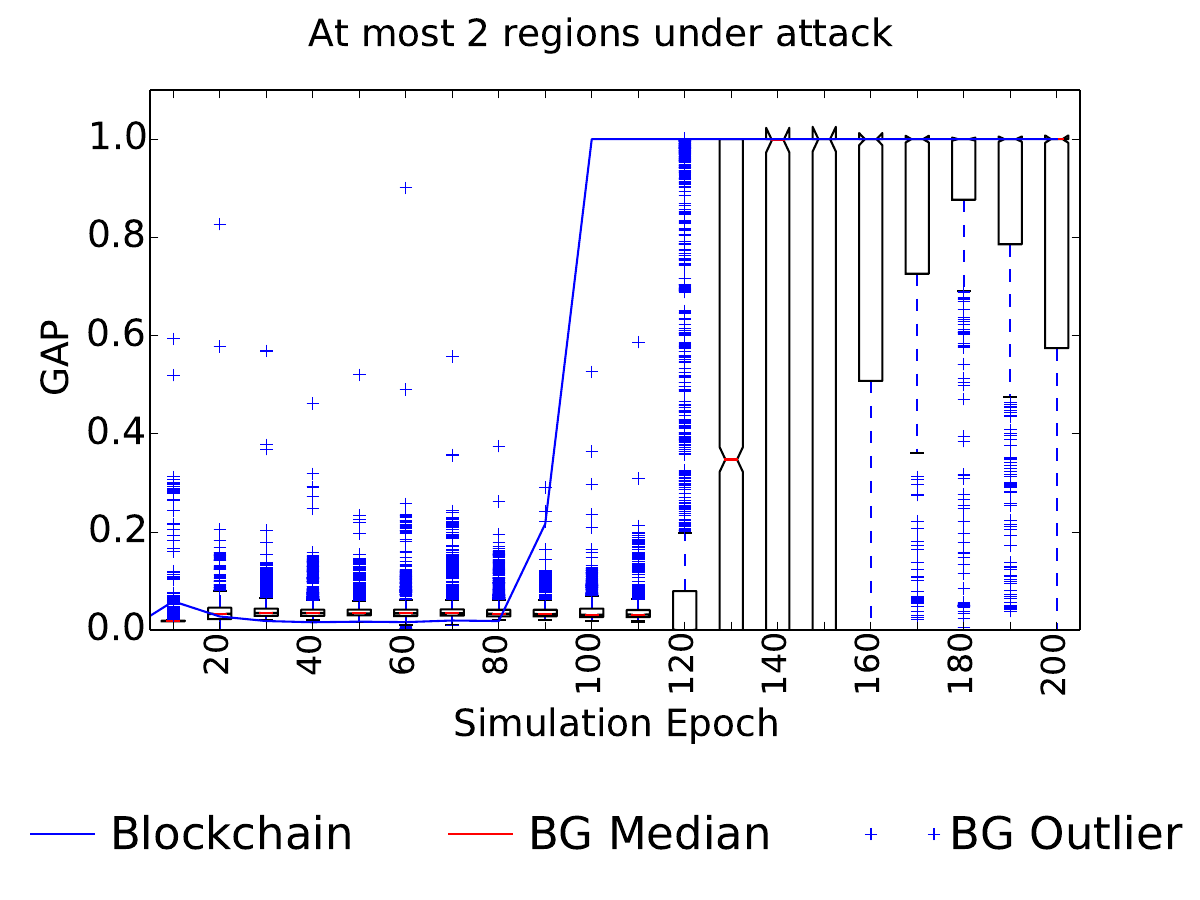}}\label{fig:10BGP}
\subfigure[25 gossip rounds per epoch]{\includegraphics[trim={0 0 0cm 1cm},clip,width=0.24\textwidth,keepaspectratio]{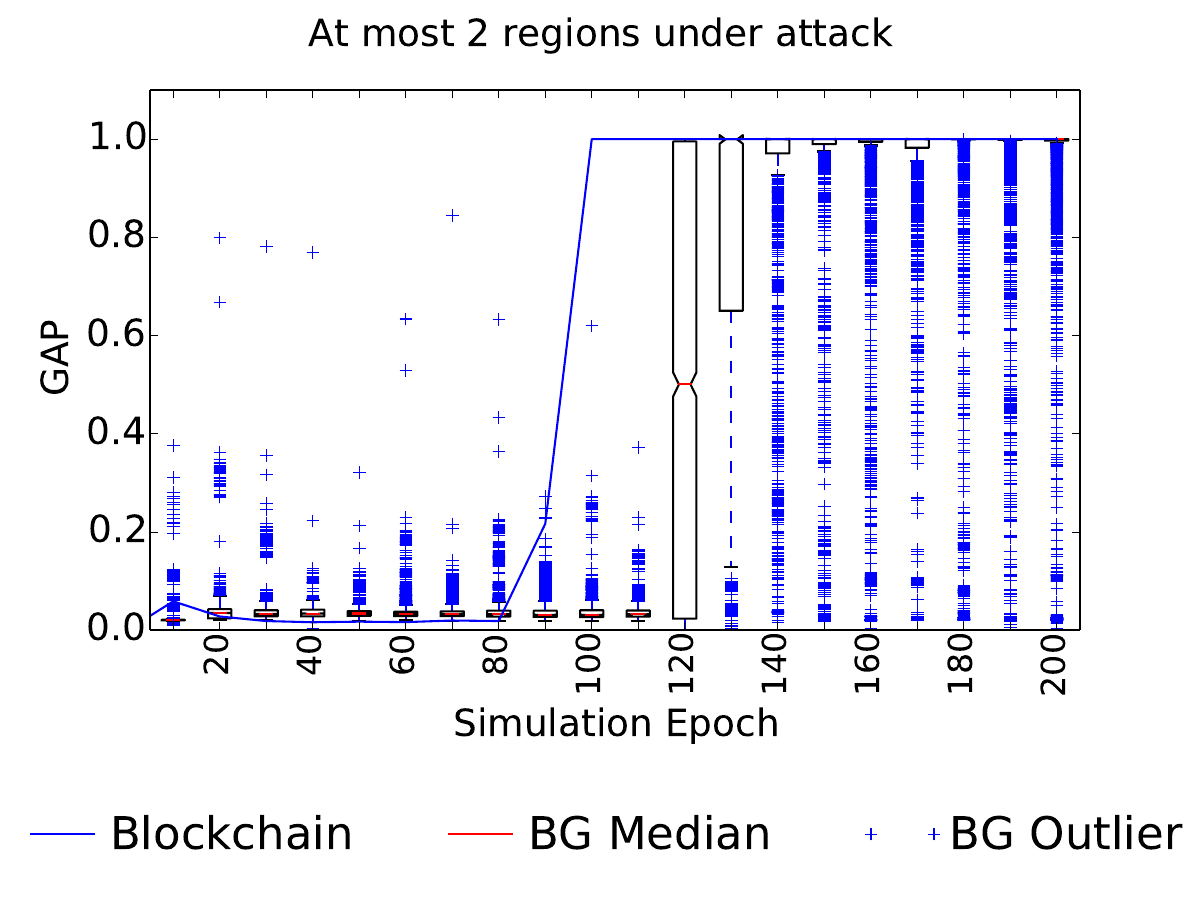}}\label{fig:25BGP}
\subfigure[50 gossip rounds per epoch]{\includegraphics[trim={0 0 0cm 1cm},clip,width=0.24\textwidth,keepaspectratio]{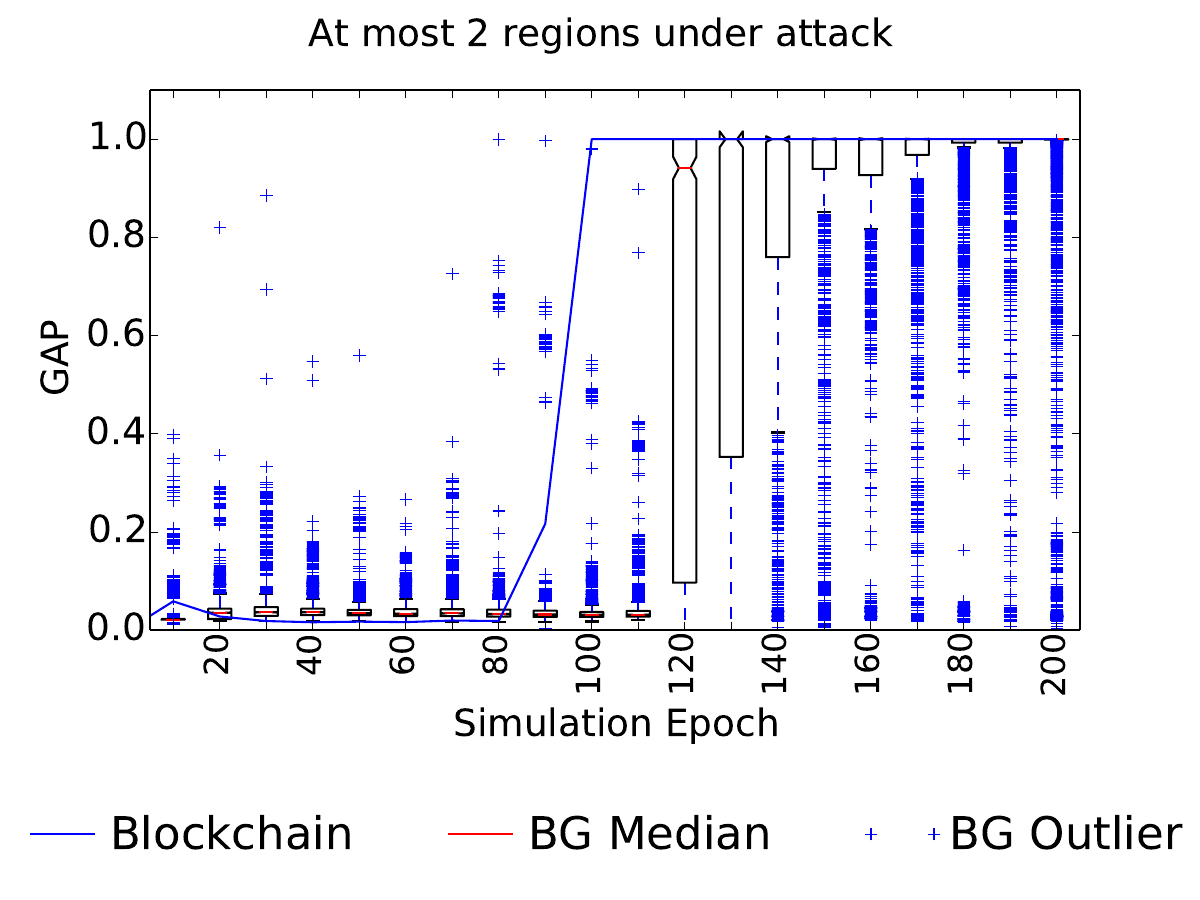}}\label{fig:50BGP}
\caption{40 region decomposition of IEEE 3012 bus case : Comparison with BG with varying local gossip rounds with fixed 2 regions under attack}
\label{fig:bglc}
\end{figure*}
\begin{figure*}[!htb]
\subfigure[At most 2 regions under attack]{\includegraphics[trim={0 0 0cm 1cm},clip,width=0.33\textwidth,keepaspectratio]{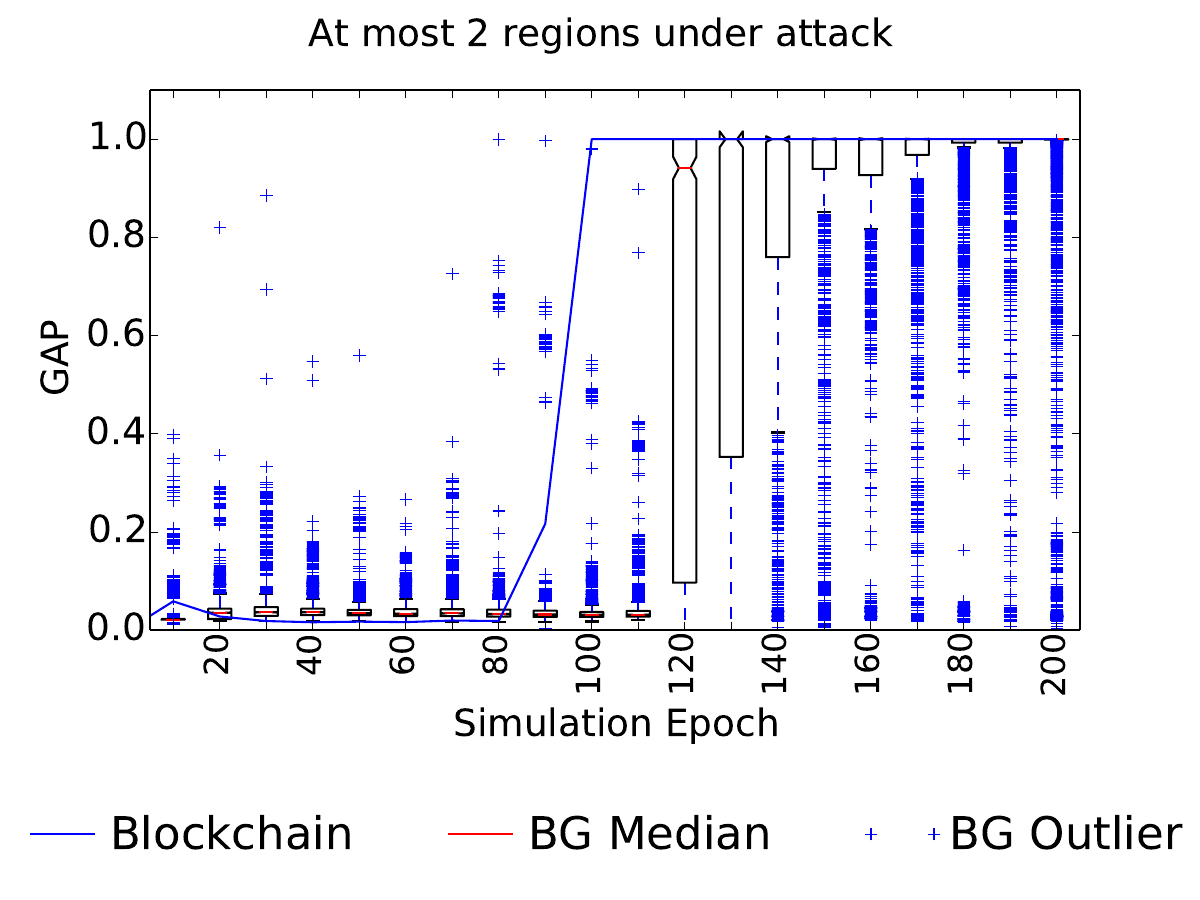}}\label{fig:2BGA}
\subfigure[At most 4 regions under attack]{\includegraphics[trim={0 0 0cm 1cm},clip,width=0.33\textwidth,keepaspectratio]{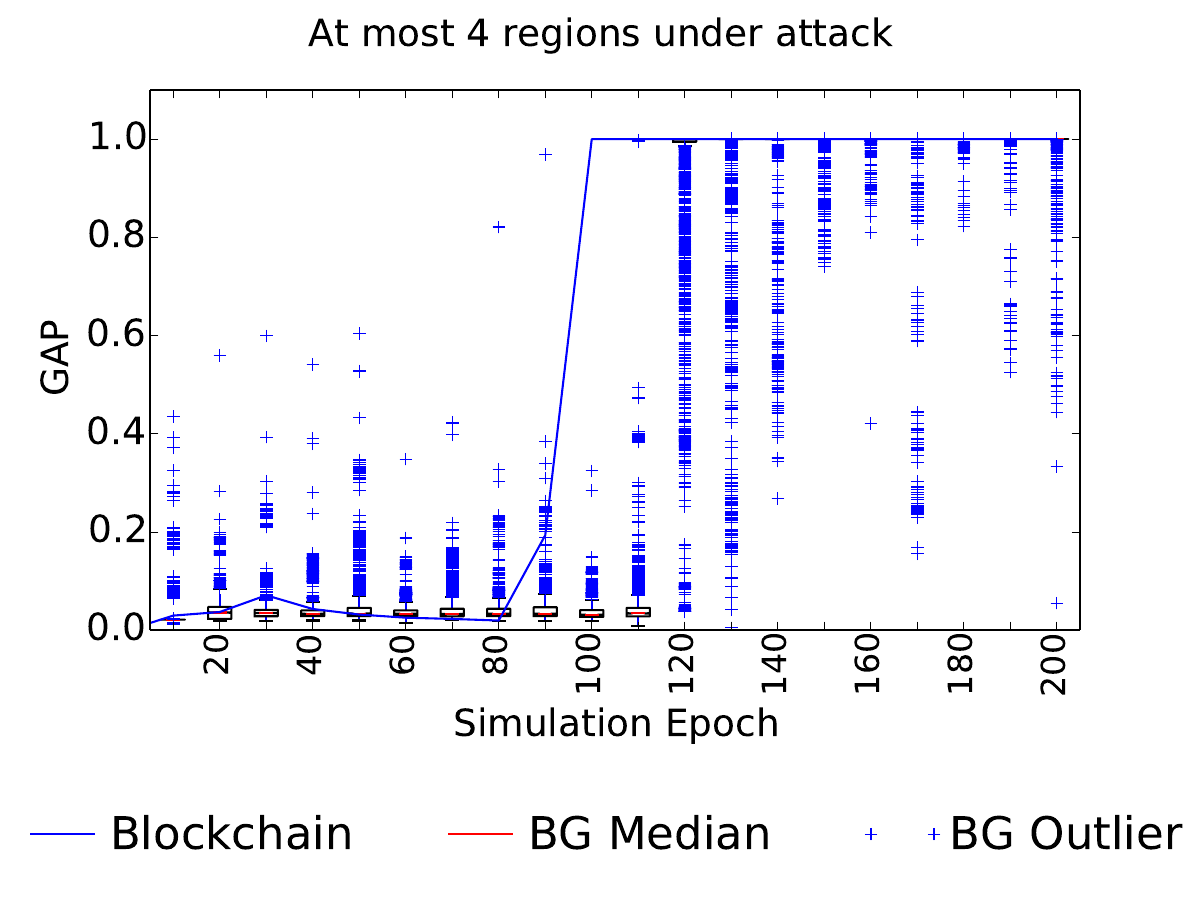}}\label{fig:4BGA}
\subfigure[At most 8 regions under attack]{\includegraphics[trim={0 0 0cm 1cm},clip,width=0.33\textwidth,keepaspectratio]{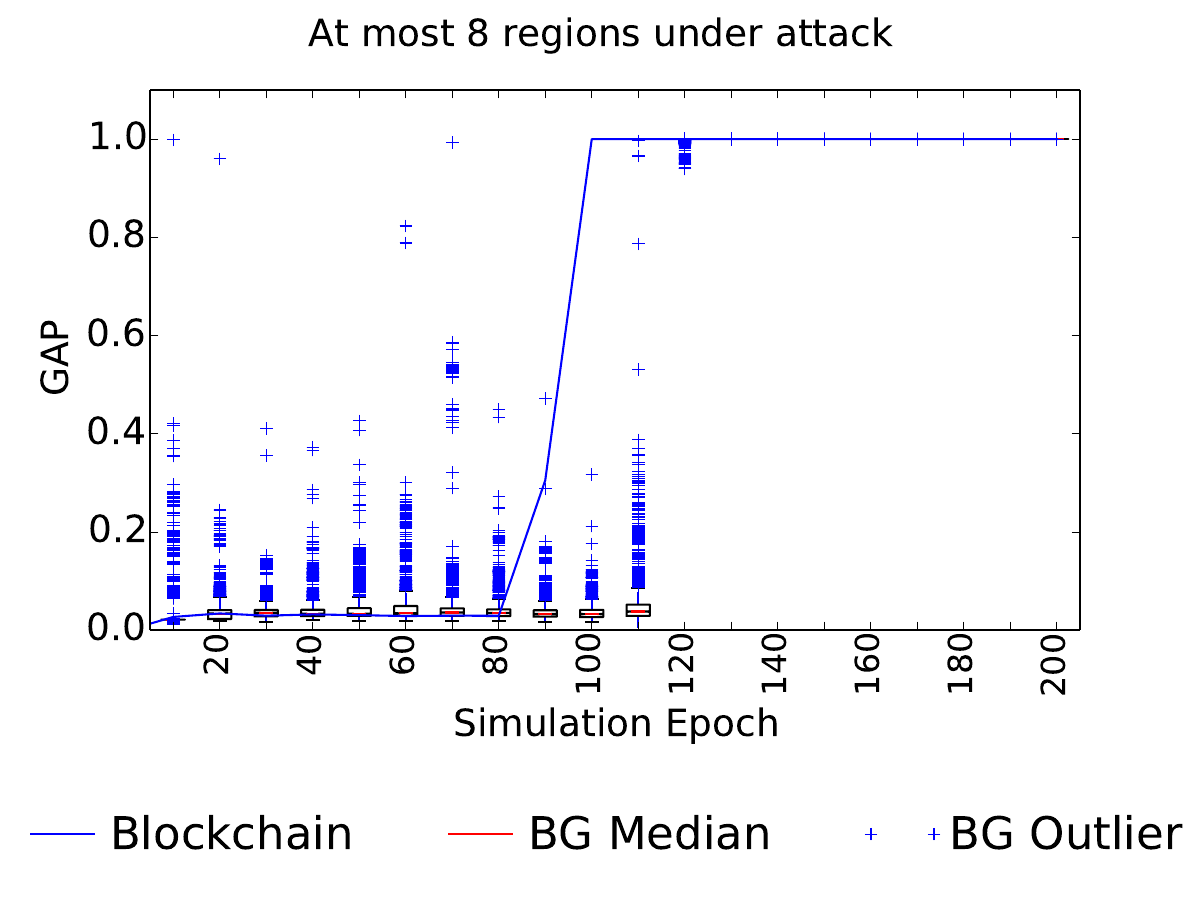}}\label{fig:8BGA}
\caption{40 region decomposition of IEEE 3012 bus case : Comparison with BG with varying number of regions under attack with local gossip rounds fixed at 50}
\label{fig:bgnuma}
\end{figure*}
\section{Experimental Results}
Our blockchain results utilize a Geth\footnote{https://geth.ethereum.org} based, Ethereum private blockchain network orchestrated on a 2.7 GHz Intel Xeon processor with 128 cores. We used \textit{Solidity 5.2} to compile the SC and utilized multithreading to simulate the region processes. Each region is assigned one thread each for the Ethereum node and the regional detection model respectively. There is no global synchronization among the regional processes. We used the IEEE 3012 bus transmission network 
to simulate a large scale transmission network consisting of 150 generators. In our experiments, we set $\alpha=0.005$ for the regional detection component for all regions. For our blockchain framework, we use a precision factor value of $D=10^6$. Further, we employ the Proof-of-Authority (PoA) consensus protocol \footnote{\url{https://wiki.parity.io/Proof-of-Authority-Chains}} given its suitability for power system applications wherein utilities can act as the authorities. PoA eliminates computationally intensive consensus by shifting the onus of trust to the authorities themselves. Our simulations occur at discrete epochs at which local alarms are reported. In all our experiments the simulation horizon is 200 epochs.

{ 
We leverage the Clique Proof-of-Authority consensus protocol which offers numerous benefits from the security perspective. First, Clique relies on the authority of real world entities that are a part of a permissioned ledger. Based on Table \ref{tab:poa_comparison}, a majority (i.e. 50\%) of the authorities have to reach consensus before a block is permanently added to the chain. Moreover, Clique can also tolerate an adversarial attack of at most 50\% of the validators/authorities. As a result, PoA improves security without relying on computational challenges, since an attacker must hack a majority of authorities in order to rescind all transactions. PoA chains are also known to have low latency, deterministic block creation process as well as faster block creation times which are more important in a permissioned setting. PoA is more suited to consortium or permissioned settings due to its security, easier maintenance as well as accountability of the authorities themselves. Therefore, in this paper, we adopt a permissioned blockchain driven approach since our problem setting involves a consortium of real world entities such as utility stakeholders with their own ICSs desirous of detecting global replay attacks.} 
\subsection{Computational Performance and Accuracy}\label{sec:performance}
We evaluate the computational performance of our blockchain framework by considering varied regional decompositions of the IEEE 3012 bus case.
Figure \ref{fig:res} presents the incurred time for simulation with varying number of total regions for our blockchain based framework. We observe that a lesser number of regions takes longer to simulate in general. We also notice that the overall simulation time decreases sharply after 4 regions, decreasing gradually and eventually settling at approximately 500secs. This is because as a result of the Clique \footnote{\url{ https://github.com/ethereum/EIPs/issues/225}} PoA consensus protocol, lower values of $n$ contribute to a high overall simulation time due to increased block generation time \cite{POA_analysis}. However, increasing $n$, results in lesser block generation time but more communication burden. Therefore, we observe that the simulation time eventually stabilizing around a fixed value even as $n$ is increased.

Figure \ref{fig:expres} presents results pertaining to experiments involving varied number of regions under attack along with varying degree of attack magnitude. We observe that the framework is robust to varying attack magnitudes and is able to qualitatively detect attacks regardless of attack magnitude. Further, the results also indicate that the framework performs consistently with varying number of regions under attack as well.
\subsection{Evaluation against BG}
Figures \ref{fig:bglc} and \ref{fig:bgnuma} depict box and line plots against discrete simulation epochs corresponding to the BG and the blockchain respectively for a magnitude of 20. Each box plot presents the spectrum of global probability values recorded by all region across 100 repeated trials of BG. Similarly, the line plot represents the global probability values determined by the blockchain framework.

Figure \ref{fig:bglc} compares the performance of the blockchain version with a BG based scheme with varying number of gossip rounds for each simulation epoch by enforcing two regional attacks. We observe that the blockchain based framework detects a global attack faster than the BG by a margin of 30 epochs. Moreover an increasing number of BG rounds per epoch leads to a sharp decrease in the mean, median and the variance of the global attack probability across all regions. Figure \ref{fig:bglc} demonstrates that despite 50 synchronous gossip rounds per epoch, a global attack detection cannot be successfully disseminated among all regions in BG.

Figure \ref{fig:bgnuma} compares the performance of the blockchain based global attack detection paradigm with its BG counterpart with increasing number of regions under attack. We fix the maximum number of gossip rounds to 50 for each simulation case. We observe that the blockchain outperforms the gossip algorithm by a margin of approximately 20 simulation epochs. Figure \ref{fig:bgnuma} also indicates that in case of BG the initial disparity of the GAP among the regions decreases with an increase in the number of regions under attack. Such a scenario results in  marginal improvement in the global attack detection. 

As postulated by Theorem \ref{thm2}, low initial error is preferred for BG, which only happens when an increasing number of regions are under attack. On the other hand, a high number of gossip rounds are required for BG in order to overcome regional connectivity constraints predicted by Theorem \ref{thm2}. Both scenarios highlight the operational obstacles associated with an information diffusion scheme based on gossip. Figures \ref{fig:bglc}, \ref{fig:bgnuma} conclusively show that the blockchain based framework delivers a reliable, timely and accurate detection as compared to its gossip counterpart in diverse operational scenarios.

\section{Conclusion and Future Work}
In this paper, we present a decentralized blockchain based global attack detection mechanism that only uses locally reported alarm and its associated statistics to detect the onset of a global replay attack. We design a novel Bayesian update mechanism requiring one global multiplication and one global addition leading to a scalable and computationally efficient blockchain paradigm. We characterize the performance of the blockchain based global attack detection mechanism against a broadcast gossip based counterpart. In order to do so, we first reformulate the computation of the global multiplication and addition operations to be amenable towards broadcast gossip. We then theoretically analyze the performance of the broadcast gossip with a limited precision blockchain version. Our analysis predicts an overall computational superiority of the blockchain version as opposed to the broadcast gossip. We implement and evaluate the blockchain based approach on a private Ethereum network with the help of Solidity for orchestrating the Smart Contracts. Our experiments demonstrate the accuracy of the decentralized detection mechanism as well as its robustness to increasing number of regions. Moreover, our results also corroborate our theoretical claim of computational superiority over the state-of-the-art, decentralized broadcast gossip paradigm by a significant margin. For our future work we plan to focus on local attacks that could be mutually correlated. We also aim to investigate a blockchain driven approach to distinguish between multiple failure modes and an attack.

\bibliographystyle{IEEEtran}   
\bibliography{main}

{ 
\section*{Appendix A}\label{sec:appendixA}
We provide a description of all the characteristic features listed in Table \ref{tab:poa_comparison} for permissioned consensus mechanisms. The features and their descriptions are as follows:
\begin{itemize}[leftmargin=*]
\item \textit{Validators} refers to the minimum number of entities required for consensus to take place. 
\item \textit{Decentralization} aspect characterizes the potential of a few entities dominating over block creation. 
\item \textit{Fault tolerance} describes the maximum number of entities that must fail to jeopardize the blockchain operation. 
\item \textit{Consistency} refers to the final state of the distributed ledger among multiple entities post consensus. 
\item \textit{Block latency} characterizes the degree of communication required in order to mint a block. 
\end{itemize}

Decentralization potential of Clique and Aura are deemed high due to the fact that all participating entities are provided an opportunity to propose blocks during run time \cite{POA_analysis}. In IBFT, validators once chosen need to be added and removed explicitly by a proposal followed by voting \cite{saltini2019ibft}. In Raft, if the leader is malicious, then the entire blockchain can be compromised, unless a change is effected using voting of the verifiers \cite{salimitari2018survey}. Therefore, the decentralization potential for IBFT and Raft is deemed medium and low respectively.

With respect to consistency, Clique relies on the Ethereum GHOST protocol to eventual resolve forks and achieve consistency of the chain. However, Aura makes no such guarantees \cite{POA_analysis}. On the other hand, IBFT emphasizes consistency over speed resulting in a stalling in cases where a fork has emerged \cite{POA_analysis}. Raft meanwhile assumes that no forking can emerge on account of there being only one leader \cite{salimitari2018survey}}
\end{document}